\tikzset{  photon/.style={decorate, decoration={snake}, draw},
    electron/.style={draw, postaction={decorate},
        decoration={markings,mark=at position .55 with {\arrow[draw]{>}}}},
    gluon/.style={decorate, draw=magenta,
        decoration={coil,amplitude=4pt, segment length=5pt}} }
\numberwithin{equation}{section}
\DeclareMathAlphabet{\pazocal}{OMS}{zplm}{m}{n}
\DeclareFontFamily{OT1}{pzc}{}
\DeclareFontShape{OT1}{pzc}{m}{it}{<-> s * [1.1500] pzcmi7t}{}
\DeclareMathAlphabet{\mathpzc}{OT1}{pzc}{m}{it}
\newtheorem{theorem}{Theorem}[section]
\newtheorem{definition}[theorem]{Definition}
\newtheorem{lemma}[theorem]{Lemma}
\newtheorem{proposition}[theorem]{Proposition}
\declaretheorem[sibling=theorem,style=definition, qed=\bell]{remark}
\numberwithin{equation}{section}
\newcommand{\CC}{{\mathbb C}}
\newcommand{\RR}{{\mathbb R}}
\newcommand{\MM}{{\mathbb M}}
\newcommand{\NN}{{\mathbb N}}
\newcommand{\ZZ}{{\mathbb Z}}
\newcommand{\Ac}{{\mathcal{A}}}
\newcommand{\Oc}{{\mathcal{O}}}
\newcommand{\Ocal}{{\mathcal{O}}}
\newcommand{\Dc}{{\mathcal{D}}}
\newcommand{\Ec}{{\mathcal{E}}}
\newcommand{\Ecal}{{\mathcal{E}}}
\newcommand{\Fc}{{\mathcal{F}}}
\newcommand{\Gc}{{\mathcal{G}}}
\newcommand{\Mcc}{{\mathcal{M}}}
\newcommand{\fA}{{\mathfrak A}}
\newcommand{\fB}{{\mathfrak B}}
\newcommand{\fD}{{\mathfrak D}}
\newcommand{\fG}{{\mathfrak G}}
\newcommand{\Floc}{\Fc_{\mathrm{loc}}}                   
\newcommand{\supp}{{\mathrm{supp} \, }}
\newcommand{\vect}[1]{\vec{#1}}
\newcommand{\eps}{\epsilon}
\newcommand{\ga}{\gamma}
\newcommand{\dl}{\delta}
\newcommand{\Dl}{\Delta}
\newcommand{\la}{\lambda}
\newcommand{\ox}{\otimes}
\newcommand{\x}{\times}
\newcommand{\ovl}{\overline}
\newcommand{\pp}{\slashed{\partial}}
\newcommand{\dg}{\mathrm{dg}}
\newcommand{\dd}{\slashed{D}}
\renewcommand{\S}{\slashed{S}}
\newcommand{\be}{\begin{equation}}
\newcommand{\ee}{\end{equation}}
\def\eg{{\it e.g.}}
\def\ie{{\it i.e.}}
\newcommand{\0}{\emptyset}
\newcommand{\vr}{\mathrm{v}}
\newcommand{\Ind}{\mathcal{P}_{\mathrm{finite}}(\NN)}
\newcommand{\etalchar}[1]{$^{#1}$}
\begin{document} 
\title[C*-algebras and fermions]{C*-algebraic approach to interacting quantum field theory: Inclusion of Fermi fields}

\author[R. Brunetti]{Romeo Brunetti}
\address{Dipartimento di Matematica, Universit\`a di Trento, 38123 Povo (TN), Italy}
\email{romeo.brunetti@unitn.it}

\author[M. D\"utsch]{Michael D\"utsch}
\address{Institute f\"ur Theoretische Physik, Universit\"at G\"ottingen, 37077 G\"ottingen, Germany}
\email{michael.duetsch3@gmail.com}

\author[K. Fredenhagen]{Klaus Fredenhagen}
\address{II. Institute f\"ur Theoretische Physik, Universit\"at Hamburg, 22761 Hamburg, Germany}
\email{klaus.fredenhagen@desy.de}

\author[K. Rejzner]{Kasia Rejzner}
\address{Department of Mathematics, University of York, YO10 5DD York, UK}
\email{kasia.rejzner@york.ac.uk}

\date{\today}

\begin{abstract}
We extend the C*-algebraic approach to interacting quantum field theory, proposed recently by Detlev Buchholz and one of us (KF) to Fermi fields. The crucial feature of our approach is the use of auxiliary Grassmann variables in a functorial way.
\end{abstract}

\maketitle

\section{Introduction}
In a recent paper \cite{BF19} it was shown that the formal S-matrices (as generating functionals of time ordered products) generate a net of local C*-algebras which form a Haag-Kastler net. The S-matrices are there interpreted as local operations labeled by classical interaction Lagrangians, and it was shown that a few relations involving relativistic causality and a classical Lagrangian yield a structure which contains the canonical commutation relations and allows the construction of Haag-Kastler nets for quite general interactions. Let us briefly recall the technical steps: we consider a real classical scalar field $\phi$ and use its full configuration space, namely $\Ec\equiv C^\infty(M,\RR)$ over some globally hyperbolic spacetime $\Mcc=(M,g)$ (in the jargon of physicists, we are ``off-shell'' \ie\ we are not restricting to configurations which are \emph{solutions} of equation of motion). The space of observables $\Fc(M)$ is considered to be the linear space of 
local functionals over $\Ec$ of polynomial kind, \eg
$$
F[\phi]=\sum_{k=0}^N \int_M f_k(x) \phi(x)^k\ , 
$$ 
with compactly supported smooth densities $f_k$ on $M$. The support of the functionals ($\supp F$) is defined as the union of the supports of the test densities $f_k,k\ge1$, and hence they are all compactly supported on $\Mcc$. When $N>2$ these functionals describe 
local self-interactions of the field $\phi$. This allows us to introduce (interacting) Lagrangian densities
$$
M\ni x\mapsto L(x)[\phi] = \left(\frac{1}{2}\left(g(d\phi(x),d\phi(x))+m^2 \phi(x)^2\right) - \sum_{k=0}^N g_k \phi(x)^k\right)d\mu_g(x) \ ,
$$
where $m^2\ge 0$ and $g_k$ are real numbers (coupling constants), and then consider full Lagrangians $L$ as $\Fc$-valued maps, namely,
$$
\Dc(M)\ni f\mapsto L(f)[\phi]\doteq\int_M  L(x)[f\phi]\ .
$$

The, in general ill defined, global action functional (i.e., corresponding to $f\equiv 1$),  is replaced by the introduction of relative Lagrangians, \ie\ by defining
$$
\delta L(\phi_0)[\phi] \doteq L(f_0)[\phi]^{\phi_0} - L(f_0)[\phi] \doteq  L(f_0)[\phi+\phi_0] - L(f_0)[\phi]\ ,
$$
where $\phi_0\in\Ec_0\subset\Ec$ is compactly supported and $f_0\in\Dc(M)$ such that $f_0\equiv 1$ on $\supp\phi_0$. Note that the relative Lagrangians do not depend upon the choice of $f_0$ and belong to $\Fc(M)$, since in the subtraction the kinetic terms disappear and the linear terms with derivatives of the field $\phi$ can be written as linear terms in $\phi$ after integration by parts.

We have now all ingredients for the core construction:  we fix one such Lagrangian $L$ and define abstractly unitary symbols $S(F)$ labelled over $\Fc(M)$. We may interpret these symbols as formal S-matrices (scattering matrices: justifications for this interpretation can be found in \cite{BF19,BDFR21}). We then generate freely a group $\Gc_L$ out of these symbols modulo the following relations
\begin{itemize}
\item $S(F)S(\delta L(\phi_0))=S(F^{\phi_0}+\delta L(\phi_0))=S(\delta L(\phi_0))S(F)$, $\phi_0\in\Ec_0$, $F\in\Fc(M)$,
\item $S(F+G+H) = S(F+G)S(G)^{-1}S(G+H)$, $F,G,H\in\Fc(M)$, provided $\supp F$ is causally later than $\supp H$.
\end{itemize}
Notice that the first requirement corresponds to the incorporation of an equation of motion w.r.t.~to the total action given by $L$ plus $F$ (unitary version of the Schwinger-Dyson equation) and the second enforces a causality notion in $\Gc_L$ implied by the causality properties of spacetime. It is now a classical construction to pass from the group $\Gc_L$ to a group algebra $\Ac_L$ and moreover to show  \cite{BF19} that the last can be promoted to a C*-algebra (and we use the same symbol for both). It is a first gratifying surprise to discover that in \cite{BF19} one shows that the C*-Weyl algebra of the canonical commutation relations is contained as a proper C*-subalgebra in $\Ac_L$. Actually, one can do more by localization of the functionals, namely we can redo the construction for any open bounded (non empty) subregion $\Oc$ of $\Mcc$ and prove that the association $\Oc\mapsto \Ac_L(\Oc)$ is a Haag-Kastler net of C*-algebras (see again \cite{BF19} where this is shown for Minkowski spacetime\footnote{The generalization to any fixed globally hyperbolic spacetime $\Mcc$ is straightforward. However, the step towards local covariance requires some non trivial arguments which can be found in \cite{BDFR21}.}).

The formalism just described was restricted to scalar fields. It is the main goal of the present paper to generalize it to interacting Fermi fields.

Classical functionals for Fermi fields can be considered as linear functionals on the Grassmann algebra over the space of field configurations (Section \ref{sec:func}, see also \cite{Rej11a}). Following the construction recalled above, one would like to associate to each such functional a formal S-matrix, but only even functionals have a direct interpretation as arguments of formal S-matrices. On the other hand, the restriction to even functionals does not allow to formulate the unitary version of the Schwinger-Dyson equation, by which the classical Lagrangian enters the framework. 

There is a well known way out, namely the use of auxiliary Grassmann variables (the so-called $\eta$-trick, see, \eg, \cite{Due19,IZ}). 
These auxiliary variables are needed for shifting the combinatorics to the bosonic situation, but besides this they should not influence the structure of the theory. A finite number of Grassmann parameters is always sufficient, but nothing should depend on their choice.   Therefore the action of the generated Grassmann algebra should be functorial in the sense that all operations commute with homomorphisms between finite dimensional Grassmann algebras. (See \cite{Ll,HHS} for an extensive discussion.) Moreover, even linear relations between such homomorphisms should be respected, so that the embedding of the auxiliary variables into the theory does not change any relations between them.
We prove that such a covariant action of Grassmann variables on algebras can always be embedded into a tensor product of the Grassmann algebra with a uniquely determined algebra (Section \ref{sec:Grass}). 

We then present an adapted version of the axioms of \cite{BF19} in Section \ref{sec:Axioms} and show that they imply for the free Dirac field the canonical anticommutation relations (Section \ref{sec:CAR}).

This is used for solving another problem, namely the construction of a net of C*-algebras. Due to the fact that odd elements of a Grassmann algebra are nilpotent, it is not possible
to equip the tensor product of a nontrivial Grassmann algebra with the algebra $\fA$ of quantum fields with a C*-norm. Moreover, for the same reason, S-matrices of functionals which depend on these Grassmann variables have an expansion in polynomials in Grassmann variables with coefficients in $\fA$. There is no reason to expect that these coefficients have to be bounded, in general.
Instead one applies the abstract construction of the C*-algebra first to the subalgebra generated by S-matrices of even functionals, and adjoins then the smeared Dirac fields which are bounded due to the anticommutation relations (Section \ref{sec:C-star}).

Our construction avoids a famous no go theorem of Powers \cite{Pow67}, who proved that in dimension $>2$ canonical anticommutation relations for time zero fields are incompatible with interactions. Powers showed that the boundedness of canonical fermi fields together with causal (anti-)commutation relations imply the boundedness of time derivatives of fermi fields which then leads to vanishing of interaction under rather general conditions. The construction of interacting theories in terms of S-matrices as described above, however, does not involve the time zero fields and also does not provide, in the interacting case, information about a possible restriction of fields to a Cauchy surface. In the free case, such an information is obtained from the unitary Schwinger Dyson equation and yields the canonical anticommutation relations for the time zero fields, as shown in Section \ref{sec:CAR}.   

In Section \ref{sec:Pert} we check that our axioms are satisfied in renormalized perturbation theory. In the appendix we briefly describe the modifications which occur when both, Bose and Fermi fields, are present.
\section{Fermionic functionals}\label{sec:func}
A (local or nonlocal) fermionic functional on some real vector space $V$ is a linear form on the Grassmann algebra $\Lambda V$ over $V$.
Equivalently it is a sequence $F=(F_n)_{n\in\NN_0}$ of alternating $n$-linear forms on $V$ with
\be\label{eq:refF}
F(\vr_1\wedge\dots\wedge \vr_n)=F_n(\vr_1,\dots,\vr_n)\ , \quad F(1_{\Lambda V})=F_0\in\RR\ .
\ee
The pointwise product of fermionic functionals is defined by
\be\label{wedge}
\begin{split}
&(F\cdot G)_n(\vr_1,\dots,\vr_n)\\
&\quad=\sum_{\sigma\in S_n}\mathrm{sign}(\sigma)\sum_{k=0}^n \frac{1}{k!(n-k)!}F_k(\vr_{\sigma(1)},\dots,\vr_{\sigma(k)})G_{n-k}(\vr_{\sigma(k+1)},\dots,\vr_{\sigma(n)} )\ .
\end{split}
\ee
Let now $V$ be the space of functions on some topological space $T$, and let $F$ be a fermionic functional on $V$. 
The support of $F$ is defined by
\be\nonumber
\supp F=\{x\in T|\text{ for all neighborhoods }U\text{ of }x\ \exists n\in\NN,\vr_1,\dots,\vr_n\in V\ee
\be\label{support}\text{ with }\supp \vr_1\subset U\text{ such that }F_n(\vr_1,\dots,\vr_n)\neq0\}
\ee
A fermionic functional $F$ is called additive if it satisfies for all $n$ the condition
\be
\begin{split}\label{eq:additive}
F_n(\vr_1+w_1+&z_1,\dots,\vr_n+w_n+z_n)\\
&=F_n(\vr_1+w_1,\dots,\vr_n+w_n)-F_n(\vr_1,\dots,\vr_n)+F_n(\vr_1+z_1,\dots,\vr_n+z_n)
\end{split}
\ee
if $\supp(w_1,\dots,w_n)\cap\supp(z_1,\dots,z_n)=\0$. We remind that $\supp(w_1,\dots,w_n)=\cup_{j=1}^n\supp(w_j)$.

Consider now the special case where $V=\Gamma(M,E)$ is the space of sections of some vector bundle $E$ over the smooth manifold $M$, equipped with its natural Fr\'echet topology. Since we are now talking about topological vector spaces, we need to specify the topology for the tensor product $\Lambda^k V$. Fortunately, in the case we consider, $V$ is nuclear, so all the tensor products are equivalent. The appropriate notion of alternating $k$-linear continuous forms in this case is the topological dual of the completed tensor product $\widehat{\Lambda}^k V$, which turns out to be the completion
of $\Lambda^k \Gamma'(M,E)$ with respect to the topology of 
$\Gamma'(M,E)^{\hat{\otimes}k}\cong\Gamma'(M^k,E^{\boxtimes k})$
where all the duals are strong. This completion is the space of compactly supported antisymmetric distributional sections of the vector bundle $E^{\boxtimes k}$ over $M^k$. We denote it by $\Ocal^k(V[1])$ where the number in the square brackets denotes the degree shift (meaning that all the elements are understood to be in degree 1) and $\Ocal$ means the space of functions, so $\Ocal^k(V[1])$ is understood as a space of functions on the graded manifold $V[1]$.

We define the \textit{smooth fermionic functionals} as 
\be
\Ocal(V[1])\doteq  \prod_{k=0}^\infty \Ocal^k (V[1])\,,
\ee
where $\Ocal^0 (V[1])\equiv \CC$.
An element $F\in \Ocal(V[1])$ will be represented by the sequence $(F_n)_n$, where $F_n\in \Ocal^n(V[1])$. Note that, due to the required continuity,  smooth fermionic functionals are always compactly supported, in contrast to the bosonic case (cf. \cite{BFR}). They are also always differentiable in the following sense:
\begin{definition}\label{GradedDerivaitve}
	Let $F\in\Ocal^k(V[1])$, $h\in V^{\hat\otimes k-1}$, $\vect{h}\in V$. 
	The left derivative of $F$ at $h$ in the direction of $\vect{h}$ is defined,
	for every integer $k\geq 0$, 
	\begin{eqnarray}
		\left<\vect{h},F^{(1)}(h)\right>
		&=& F(\vect{h}\wedge h),\quad \text{for}\quad  k>0,\\
		F^{(1)} &=& 0\quad F\in\Ocal^0(V[1])\,.
	\end{eqnarray}
	This definition is then extended to $\Ocal(V[1])$ in a natural way. The right derivative is defined analogously.
\end{definition}
To illustrate this definition, consider the case  
$M=\mathbb{M}$ of Minkowski spacetime, 
$E=\mathbb{M}\times \RR$ and $V=\Gamma(\mathbb{M},E)$. We define
$F\in\Ocal^2(V[1])$ by
\be
F(h_1\wedge h_2)=\sum_{\mu,\nu=0}^3\int f(x) 
a^{\mu\nu}\partial_\mu h_1(x)\partial_\nu h_2(x)\,
d^4x\,,
\ee
where $h_1$ and $h_2$ are in $\Gamma(\mathbb{M},E)=C^\infty(\mathbb{M})$,
$f$ is in ${C}^{\infty}_0(\mathbb{M},\CC)$ and $a$ is any antisymmetric, constant $4\times 4$ matrix. Now we have,
for $h$ and $\vect{h}$ in $\Gamma(\mathbb{M},E)$:
\be
\left<\vect{h},F^{(1)}(h)\right>
= \sum_{\mu,\nu=0}^3\int f(x)
a^{\mu\nu}(\partial_\mu \vect{h}\partial_\nu h)(x)\,d^4x\,.
\ee
As a second example,
take $M=\mathbb{M}$, $E=\mathbb{M}\times \RR^k$ and again $V=\Gamma(\MM,E)$.
Let $h_1=(h_1^j)_{j=1}^k$, $h_2=(h_2^j)_{j=1}^k$ and $h= (h^j)_{j=1}^k$ be three sections
in $\Gamma(\mathbb{M},E)$. 
Define 
\be
G(h_1\wedge h_2)=\sum_{i,j=1}^k\int a_{ij}(x)\,h_1^i(x)h_2^j(x)\,\,d^4x\,,
\ee
with any antisymmetric $k\times k$ matrix $(a_{ij}(x))$, all coefficients satisfying $a_{ij}\in {C}^{\infty}_0(\MM,\CC)$. We obtain
\be
\left<\vect{h},G^{(1)}(h)\right>= \sum_{i<j}\int a_{ij}(x)\,\left(\vect{h}^i(x)h^j(x)-\vect{h}^j(x)h^i(x)\right)\, d^4x.
\ee
It has been shown, see e.g. \cite{Rej11a} that the left derivative defined this way satisfies the Leibniz rule. Iterating this definition, we can define the $n$th left derivative $F^{(n)}$ of a fermionic functional.

Note that the derivative of  $F\in\Ocal^k(V[1])$ is a jointly continuous map 
\be
F^{(1)}: 
V\times V^{\hat\otimes k-1}\longrightarrow \RR\ .
\ee
 It can be identified with a vector-valued distribution in $\Gamma'(M,E)\hat\otimes \Ocal^{k-1}(V)$. More generally, the $n$th derivative  $F^{(n)}$ is an element of $\Gamma'(M^n,E^{\oplus n})\hat\otimes \Ocal(V[1])$. The completed tensor product used here is the projective tensor product.
For more details, see e.g. section 3.3 of \cite{Book}. As noted in \cite{Rej11a}, the definitions of a wavefront set can be extended to such vector-valued distributions and the usual theorems about multiplying distributions apply to this case.

The ``standard'' characterization of locality for a compactly supported functional 
$F\in\Ocal^k(V[1])$
is the requirement that $F$  has the form 
\begin{equation}\label{eq:F-local}
F(h_1,\dots,h_k)=\int_{M} \alpha(
j_x(h_1),\dots,j_x(h_k))\,,
\end{equation}
where $\alpha$ is a compactly supported density-valued alternating function on $k$ arguments from the jet bundle. Note that $\alpha$ automatically depends only on the finite jet of the arguments, due to multilinearity and continuity.

It is easy to see that 
every local functional \eqref{eq:F-local} is additive \eqref{eq:additive}; however, additivity does not suffice for locality -- 
an additional smoothness assumption is needed. For the analogous problem for bosonic functionals, locality is proved for two different 
versions of this additional assumption, see \cite[Thm.~VI.3]{BDGR} and \cite[Prop.~2.2]{BFR}). We give here the fermionic analogon
of the former theorem, the general case of functionals depending on both fermionic and bosonic variables
is treated in the appendix.
\begin{theorem}
	Let 
	$F\in\Ocal(V[1])$.	Assume that
	\begin{enumerate}
		\item $F$ is additive.
		\item For every  $h\in\bigoplus_{k\in\NN} V^{\hat{\otimes}k}$, the first derivative
		$F^{(1)}$ of $F$ has
		empty  wave front set as a vector-valued distribution and the map
		$h\mapsto F^{(1)}(h)$ is Bastiani smooth\footnote{See \cite{Mich38,Bas64,Ham} for details on this notion of differentiability and smoothness of functionals on locally convex topological vector spaces, and \cite{BDGR} for a pedagogical review.} from $\bigoplus_{k\in\NN} V^{\hat{\otimes} k}$  to $\Gamma_c(M,E^*)$. Here $E^*$ denotes dual bundle.
	\end{enumerate}
	Then $F$ is local.
\end{theorem}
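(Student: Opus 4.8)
The plan is to adapt the strategy of \cite[Thm.~VI.3]{BDGR} from the bosonic to the fermionic setting. The statement is degree-by-degree in $k$, so I would fix $F\in\Ocal^k(V[1])$ and show it has the form \eqref{eq:F-local}; the general case $F\in\Ocal(V[1])$ follows by taking the product over $k$. The key idea is that additivity \eqref{eq:additive} together with the empty-wave-front-set condition on $F^{(1)}$ forces $F^{(1)}$, evaluated at a point $h\in\bigoplus_k V^{\hat\otimes k}$, to be an honest smooth compactly supported section of $E^*$ (this is already built into hypothesis (2)), and moreover that additivity makes the value $\langle\vect h, F^{(1)}(h)\rangle$ depend on $h$ only through its jet at each point. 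Reconstructing $F$ from $F^{(1)}$ by integration along a path in the (graded) configuration space then produces the local representative $\alpha$.

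First I would record the infinitesimal form of additivity: differentiating \eqref{eq:additive} shows that if $\vect h$ and the argument at which we differentiate have disjoint supports in the appropriate sense, the mixed contributions vanish, i.e. $F^{(1)}(h)$ as a distribution is additive in $h$ in the sense that for $h = h_1+h_2$ with $\supp h_1\cap\supp h_2=\emptyset$ one has a splitting of $F^{(1)}(h)$ consistent with $F^{(1)}(h_1)$ and $F^{(1)}(h_2)$ away from the overlap. Combined with the empty wave front set (so $F^{(1)}(h)\in\Gamma_c(M,E^*)$, a genuine function, by the hypothesis) and Bastiani smoothness of $h\mapsto F^{(1)}(h)$, a standard argument — the fermionic analog of the Peetre-type theorem used in \cite{BDGR} — shows that the value of $F^{(1)}(h)$ at $x\in M$ depends only on the jet $j_x(h)$. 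Concretely, if two configurations $h,h'$ agree to infinite order at $x$, their difference can be localized away from $x$ by additivity, so $F^{(1)}(h)(x)=F^{(1)}(h')(x)$; smoothness upgrades this to a smooth dependence on finitely many jet coordinates. This yields a density-valued function $\beta$ on the jet bundle with $\langle \vect h, F^{(1)}(h)\rangle = \int_M \beta\bigl(j_x(\vect h),\, j_x(h)\bigr)$, antisymmetric in the obvious graded sense because $F\in\Ocal^k(V[1])$ is built from an alternating form.

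Next I would integrate. Since $F\in\Ocal^k(V[1])$ is a \emph{homogeneous} degree-$k$ polynomial functional (an alternating $k$-linear form), it is recovered from its derivative by Euler's relation: $F(h_1\wedge\dots\wedge h_k)$ is determined, up to the constant $F_0$ which is absorbed in degree $0$, by iterated application of $F^{(1)}$, and one writes $F = \tfrac1k$ times a pairing of $h$ with $F^{(1)}$ evaluated suitably, or more cleanly one reconstructs the alternating $k$-linear form directly from $\beta$ by polarization. Plugging in the jet-local form of $F^{(1)}$ gives $F(h_1,\dots,h_k)=\int_M \alpha\bigl(j_x(h_1),\dots,j_x(h_k)\bigr)$ for a compactly supported density-valued alternating $\alpha$ on the jet bundle, which is precisely \eqref{eq:F-local}. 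The remark that $\alpha$ depends only on a finite jet is automatic from continuity and multilinearity, as noted after \eqref{eq:F-local}.

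The main obstacle I anticipate is the jet-locality step: making rigorous, in the fermionic/graded setting, that empty wave front set plus additivity plus Bastiani smoothness implies pointwise dependence on finite jets. In the bosonic case this is \cite[Thm.~VI.3]{BDGR}, and the graded case requires checking that the alternating structure of $\Ocal^k(V[1])$ does not interfere — essentially one must verify that the relevant difference-quotient and localization arguments, and the Peetre-type finiteness, go through when the "directions" $\vect h$ anticommute. Since the derivative $F^{(1)}$ itself is a perfectly ordinary vector-valued distribution in $\Gamma'(M,E)\hat\otimes\Ocal^{k-1}(V)$ with the usual wave front calculus (as recalled in the text after Definition~\ref{GradedDerivaitve}), I expect the graded bookkeeping to be a routine—if slightly tedious—sign-tracking exercise rather than a genuine new difficulty, and the analytic heart of the proof is identical to the bosonic one.
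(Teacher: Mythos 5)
Your proposal matches the paper's proof in all essentials: both fix $F\in\Ocal^k(V[1])$, recover $F$ from $F^{(1)}$ via the Euler relation for homogeneous multilinear functionals, invoke the additivity--empty-wave-front-set--Bastiani-smoothness hypotheses to apply the jet-locality result of \cite[Prop.~VI.14]{BDGR}, and then use \cite[Lemma~VI.15, Prop.~VI.4]{BDGR} to obtain a smooth $\alpha$ on the jet bundle. The paper likewise treats the graded bookkeeping as a routine adaptation of the bosonic argument, so no further comparison is needed.
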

\begin{proof}
The proof is patterned after the paper \cite{BDGR}, and we provide the necessary ideas to fill the gaps for the use of their results in our context.
	
	Let $F\in \Ocal^k(V[1])$, $k\neq 0$. We have
	\be
	\begin{split}
	F(h_1\wedge\dots\wedge h_k) &=  \frac{1}{k}\sum_{i=1}^k(-1)^{k-1}\int_M   
	F^{(1)}(h_1\wedge\dots\widehat{h_i}\dots 
	\wedge h_k)(x) h_i(x)\,\, dx\\
	&= \int_M   F^{(1)}(h_2\wedge\dots \wedge h_k)(x) h_1(x)\,\, dx\,,
	\end{split}
	\ee
	Denote $h\doteq  h_1\wedge\dots\wedge h_k$ and write
	\be
	F(h)=\int_M c_{h}(x)\,\, dx\,,
	\ee
	where 
	$c_{h}(x)= \mathrm{ev}_x\left(F^{(1)}
	(h_2\wedge\dots \wedge h_k) h_1\right)$.	
	Now, we use the fact that, by assumption, the wavefront set of $F^{(1)}$ is empty and the map $h\mapsto F^{(1)}(h)$ is Bastiani smooth, to apply proposition VI.14 of \cite{BDGR} 
	and conclude that the function 
	$c_{h}$ depends only on
	finite jets of $h_1,\dots,h_k$. Finally, we use 
	Lemma VI.15 of the same reference and their Proposition VI.4 to conclude that the resulting 
	function  $\alpha$ on the jet bundle is smooth. Hence, $F$ is local.
\end{proof}
\section{Covariant Grassmann multiplication}\label{sec:Grass}
We are confronted with the following problem: We want to construct the algebra of observables, extended also to fermionic operators. But the relations characterizing this algebra $\fA$ contain auxiliary Grassmann parameters whose only purpose is to allow the use of combinatorial formulas known from the bosonic case. We thus obtain in a first step subalgebras  $\fA_G$ of tensor products  $G\otimes \fA$ of Grassmann algebras $G$ with $\fA$ that are generated by even elements and the Grassmann algebra itself (understood as $G\otimes 1_{\fA}$). 
The aim is to reconstruct the algebra $\fA$ from that family of subalgebras. To this end we equip this family of subalgebras with the following structure.

Let $\mathfrak{Grass}$ denote the category of finite dimensional real Grassmann algebras, with homomorphisms as arrows and let $\mathfrak{Alg}^{\ZZ_2}$ be the category of $\ZZ_2$-graded unital associative algebras, with unital homomorphisms respecting the $\ZZ_2$ gradation as arrows. Let now $R:\mathfrak{Grass}\rightarrow \mathfrak{Alg}^{\ZZ_2}$ be the inclusion functor.

\begin{definition}\label{def:Grass-functor}
A covariant Grassmann multiplication algebra is pair $(\mathfrak{G},\iota)$ consisting of a
functor \be
\fG:\mathfrak{Grass}\to\mathfrak{Alg}^{\ZZ_2}
\ee
and a natural embedding $\iota:R\Rightarrow \mathfrak{G}$ \ie\ a family $(\iota_G)_G$ of injective homomorphisms $\iota_G:G\to \fG G$ with 
\be
\iota_{G'}\circ\chi=\fG\chi\circ\iota_G\ ,\quad \text{ for homomorphisms }\chi:G\to G'\ .
\ee
$$
\begin{tikzpicture}
\matrix(M)[matrix of math nodes, row sep=1.5em, column sep=1.5em]{
G &&& G' \\
&&&\\
\fG G &&& \fG G'\\
   };
\begin{scope}[every node/.style={midway,font=\footnotesize}]
\draw[->] (M-1-1) -- node[above] {$\chi$} (M-1-4);
\draw[->] (M-1-1) -- node[left] {$\iota_G$} (M-3-1) ;
\draw[->] (M-3-1) -- node[below] {$\fG\chi$} (M-3-4);
\draw[->] (M-1-4) -- node[right] {$\iota_{G'}$} (M-3-4) ;
\end{scope}
\end{tikzpicture}
$$
We require the following properties of $(\mathfrak{G},\iota)$:

\begin{enumerate}
\item $\iota_G(G)$ is graded central in $\fG G$, in the sense that
\be\label{eq:eta-a}
\iota_G(\eta)\,a=(-1)^{\mathrm{dg}(\eta)\mathrm{dg}(a)}\,a\,\iota_G(\eta)\ ,\ \eta\in G,\ a\in\fG G\ ,
\ee
where $\dg(\cdot)\in\{0,1\}$ denotes the degree.\footnote{In the literature, often the degree in the Grassmann algebra and the degree of intrinsic fermionic variables are distinguished, such that intrinsic variables and auxiliary Grassmann parameters always commute. While this sometimes avoids sign factors in practical calculations (see e.g.~\cite[Chap.~5]{Due19})
it seems to be less appropriate in a conceptual analysis.}
\item \label{linear}Let $\lambda_i\in\RR$ and $\chi_i:G\to G'$, $i=1,\dots,n$ be homomorphisms between Grassmann algebras with $\sum_{i=1}^n\lambda_i\chi_i=0$. Then $\sum_{i=1}^n\lambda_i\,\fG\chi_i=0$.\footnote{This entails that $\fG$ is a functor between \emph{enriched} categories (over the category of vector spaces).}
\end{enumerate}
\end{definition}
The first property in the above definition is quite natural to require. 
The second one is a condition motivated by the specific problem we are trying to solve, namely, without this condition we would not be able to prove the key reconstruction result (\ie\ the reconstruction of the algebra $\fA$).
Indeed, the condition does not follow from the other conditions, as may be seen from the example $\fG G=G\otimes G$ and $\fG\chi=\chi\otimes \chi$. We observe that the linearity condition (\ref{linear}) may be understood as a minimality condition 
on the extension by anticommuting parameters.

An example of a covariant Grassmann multiplication algebra is the functor $\fG^{\fA}$ with a graded unital algebra $\fA$ 
which maps Grassmann algebras $G$ to tensor products 
$\fG^{\fA}G=G\otimes\fA$ with the product
\be
(\eta_1\ox a_1)\cdot (\eta_2\ox a_2)\doteq (-1)^{\dg(\eta_2)\dg(a_1)}\,(\eta_1\eta_2)\ox(a_1a_2),\quad
\eta_1,\eta_2\in G,\,\,a_1,a_2\in\fA\ ,
\ee
and morphisms $\chi:G\to G'$ to morphisms  
$\fG^{\fA}\chi:G\otimes\fA\to G'\otimes\fA$ by
\be\label{eq:GA}
\fG^{\fA}\chi(\eta\otimes a)=\chi(\eta)\otimes a\ ,\ \eta\in G\ ,\ a\in\fA\ .
\ee
The natural transformation $\iota$ is given by
\be
\iota_G(\eta)=\eta\otimes 1_{\fA}\ ,\ \eta\in G\ .
\ee
It is easy to see that also the linearity condition (\ref{linear}) of Definition \ref{def:Grass-functor} is satisfied.
In the following we simplify the notation by identifying $\iota_G(\eta)$ with $\eta$ for $\eta\in G$ and
$1_{G}\otimes a$ with $a$ for $a\in\fA$, and similarly we write $\eta a$ for $\eta\ox a\in G\ox\fA$.

 We apply this construction to the exterior algebra over some vector space $V$ (i.e. $\mathfrak{A}=\Lambda V$) as well as to its dual, the algebra of fermionic functionals on $V$. The latter we mainly restrict to the subspace of local functionals (denoted by $\Floc$),
such that $\fG^{\Floc}$ associates to every Grassmann algebra $G$ a $G$-bimodule. A fermionic functional induces, for any $G$, a  $G$-module homomorphism $F_G$ from $G\otimes \Lambda V$ to $G$ by
\be\label{eq:G-extension}
F_G(\omega\eta)=F(\omega)\eta=\eta F(\omega)\ ,\qquad \omega\in\Lambda V, \eta\in G\ \ ,
\ee
and we identify $\eta F$ with the map $\omega\mapsto \eta F(\omega)$.
The $\wedge$-symbol for the product in $\Lambda V$ is usually omitted. At some places we use it in order to make clear that $V$ is identified with $\Lambda^1(V)$.

As an example, for $\vr^1,\vr^2\in\Lambda^1(V)=V$ and  odd elements $\eta_1,\eta_2\in G$, we obtain
\be
F_G\bigl((\vr^1\eta_1) (\vr^2\eta_2)\bigr)=F_G\bigl((\vr^1\vr^2)(\eta_2\eta_1)\bigr)
=F(\vr^1\wedge\vr^2)\eta_2\eta_1\ .
\ee
The family $(F_G)_G$ is a natural transformation $\mathfrak{F}:\mathfrak{G}^{\Lambda V}\Longrightarrow\mathfrak{G}^{\RR}$, that is,
\be
\mathfrak{G}^{\RR}\chi\circ F_G=F_{G'}\circ\mathfrak{G}^{\Lambda V}\chi\ .
\ee
$$
\begin{tikzpicture}
\matrix(M)[matrix of math nodes, row sep=1.5em, column sep=1.5em]{
G\ox \Lambda V &&& G'\ox \Lambda V \\
&&&\\
G\ox\RR &&& G'\ox\RR\\
   };
\begin{scope}[every node/.style={midway,font=\footnotesize}]
\draw[->] (M-1-1) -- node[above] {$\fG^{\Lambda V}\chi$} (M-1-4);
\draw[->] (M-1-1) -- node[left] {$F_G$} (M-3-1) ;
\draw[->] (M-3-1) -- node[below] {$\fG^\RR\chi$} (M-3-4);
\draw[->] (M-1-4) -- node[right] {$F_{G'}$} (M-3-4) ;
\end{scope}
\end{tikzpicture}
$$

$F$ is already fixed if we know the maps $F_{G}$  on all elements of the form
\be
\exp {\sum_{i\in I} \vr^i\eta_i}
\ee
with odd elements $\eta_i\in G$, $\vr^i\in\Lambda^1(V)=V$ and a finite index set $I\in\Ind$, where $F_G(1_G)=F_0 1_G$ (see \eqref{eq:refF}).
(This is called the ``even rules principle'' in \cite{FieldsStrings,CCF,Ll}.) So fermionic functionals $F$ on $V$ can be characterized as coherent families of $G$-valued maps $F_G\circ\exp$ on the even part of the Grassmann modules $G\otimes V$.  

In particular we can define shifts in the arguments as they occur in the unitary Dyson-Schwinger equation (\ie, the relation 'Dynamics' given in \eqref{eq:dynamics} below). A shifted functional ${F}^{\vec w}$, with $\vec w=\sum_{j\in J} \vec w^j\theta_j$ with  
odd elements $\theta_j$ of some Grassmann algebra $G'$ and $\vec w^j\in V$, $J\in\Ind$, is defined as a family 
$(F^{\vec w}_{G})_{G}$ of $G$-module maps from $G\otimes\Lambda V$ to $G\otimes G'$, 
\be\label{eq:shift-F}
\begin{split}
F_{G}^{\vec w}\bigl(\exp{\sum_{i\in I} \vr^i\eta_i}\bigr) &=
F_{{G\otimes G'}}\bigl(\exp{(\sum_{i\in I} \vr^i\eta_i+\sum_{j\in J}\vec w^j \theta_j)}\bigr)\\
&=\sum_{n\geq 0}\sum_{i_1<\ldots <i_n}\,F_n^{\vec w}(\vr^{i_1},\ldots,\vr^{i_n})\eta_{i_n}\cdots\eta_{i_1}\ ,
\end{split}
\ee
with  alternating multilinear $G'$-valued maps $F^{\vec w}_n$ as components
\be
F^{\vec w}_n(\vr^1,\dots,\vr^n)=\sum_{k\geq 0}\sum_{j_1<\ldots <j_k\in J}F_{k+n}(\vr^1,\dots,\vr^n,\vec w^{j_1},\dots,\vec w^{j_k})\,
\theta_{j_k}\cdots\theta_{j_1}\ .
\ee 

We will see that every covariant Grassmann multiplication algebra is almost of the form $\fG^{\fA}$ for some graded algebra $\fA$,
which is \emph{universal} in the following sense.
\begin{theorem}\label{thm:Grass}
Let $\fG$ be a covariant Grassmann multiplication algebra as defined above.
Then there exists a graded unital algebra $\fA$ and a natural embedding
\be
\sigma\equiv(\sigma_G)_G:\mathfrak{G}\Longrightarrow\mathfrak{G}^{\fA}
\ee
such that for any other graded unital algebra $\fA'$ with a natural embedding $\sigma':\mathfrak{G}\Longrightarrow\mathfrak{G}^{\fA'}$ there exists a unique homomorphism $\tau:\fA\rightarrow\fA'$ with $\sigma_G'=(\mathrm{id}\otimes\tau)\circ\sigma_G$.
\end{theorem}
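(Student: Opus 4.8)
The plan is to construct $\fA$ by ``peeling off'' the auxiliary Grassmann variables: condition (\ref{linear}) lets us turn the Grassmann degree into an algebra grading on each $\fG G$, then $\fA$ is the direct limit of the top graded pieces, and $\sigma$ is defined one graded piece at a time. First I would upgrade condition (\ref{linear}) to the statement that $\fG$ extends to an $\RR$-linear functor on the category with the same objects as $\mathfrak{Grass}$ but with morphism spaces the $\RR$-linear spans of Grassmann homomorphisms; composition stays bilinear, so $\fG$ carries any idempotent lying in such a span to an idempotent on the corresponding $\fG G$. For $G=\Lambda\RR^n$ the coordinate rescalings $\theta_i\mapsto t_i\theta_i$ are homomorphisms acting on the line $\RR\,\theta_S$ (where $\theta_S:=\prod_{i\in S}\theta_i$) by $\prod_{i\in S}t_i$, so Lagrange interpolation in the $t_i$ exhibits the multidegree projections $\Pi_S\colon\Lambda\RR^n\to\RR\,\theta_S$ inside the span of homomorphisms; hence $\fG(\Lambda\RR^n)=\bigoplus_{S\subseteq\{1,\dots,n\}}(\fG\Lambda\RR^n)_S$ with $(\fG\Lambda\RR^n)_S:=\fG(\Pi_S)\,\fG(\Lambda\RR^n)$. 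Feeding the sign homomorphisms $\theta_i\mapsto-\theta_i$ into the same machine shows this is an algebra grading in the sense $(\fG G)_S(\fG G)_{S'}=0$ for $S\cap S'\neq\0$ and $(\fG G)_S(\fG G)_{S'}\subseteq(\fG G)_{S\sqcup S'}$ otherwise; moreover $(\fG G)_\0\cong\fG(\RR)$ via $\fG$ of the unit $\RR\to G$ (which is injective), the element $\iota_G(\theta_S)$ lies in $(\fG G)_S$, and for $m\le n$ and $S\subseteq\{1,\dots,m\}$ the homomorphism $\fG(\Lambda\RR^m\hookrightarrow\Lambda\RR^n)$ restricts to an \emph{isomorphism} $(\fG\Lambda\RR^m)_S\to(\fG\Lambda\RR^n)_S$ (both composites with $\fG$ of the retraction $\theta_{>m}\mapsto 0$ reduce, by linearity, to the relevant $\fG(\Pi_T)$'s and hence to identities). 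In particular all graded pieces indexed by a set of a fixed cardinality $k$ are canonically isomorphic to one and the same graded vector space $C_k$, with $C_0\cong\fG(\RR)$.

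Now define $\fA$. Right multiplication by the canonical generator $\iota(\theta_{m+1}\theta_{m+2})$ yields degree-preserving maps $C_m\to C_{m+2}$, and I would set $\fA:=\varinjlim_k C_{2k}$. The multiplications $(\fG G)_S\otimes(\fG G)_{S'}\to(\fG G)_{S\sqcup S'}$ are compatible with these stabilizations and descend to an associative unital product on $\fA$ (unit $=$ image of $1\in C_0\cong\fG\RR$), so $\fA\in\mathfrak{Alg}^{\ZZ_2}$; when $\fG=\fG^{\fB}$ this limit is just $\fB=\fG\RR$, but in general $\fA$ is strictly larger than $\fG\RR$ -- which is the ``almost''. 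Next I define $\sigma_G\colon\fG G\to G\otimes\fA$ piece by piece: on $(\fG G)_S\cong C_{|S|}$ it is, for $|S|$ even, the canonical map into the direct limit, and for $|S|$ odd the composite $C_{|S|}\to C_{|S|+1}\to\fA$ with one further multiplication by $\iota(\theta)$; in both cases the image sits in the slot $\theta_S\otimes\fA$. One then checks, routinely, that each $\sigma_G$ is a $\ZZ_2$-graded unital algebra homomorphism (using $\theta_S\theta_{S'}=\pm\theta_{S\sqcup S'}$ and the compatibility of the cell products with the stabilizations), that $\sigma=(\sigma_G)_G$ is natural in $G$ (the decompositions and the maps $\fG\chi$ are compatible with stabilization), and that $\sigma_G\circ\iota_G$ is the standard embedding $\eta\mapsto\eta\otimes 1$ (since $\iota_G(\theta_S)$ goes to the unit of the limit).

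For the universal property, given any $(\fA',\sigma')$ as in the statement, naturality of $\sigma'$ under the rescalings forces $\sigma'_G\big((\fG G)_S\big)\subseteq\theta_S\otimes\fA'$, so $\sigma'$ determines linear maps $(\fG G)_S\to\fA'$; naturality under inclusions and permutations together with multiplicativity of $\sigma'$ make these compatible with the diagram defining $\fA$, hence they factor through a unique $\tau\colon\fA\to\fA'$, which is an algebra homomorphism because $\sigma'$ is, and satisfies $\sigma'_G=(\mathrm{id}\otimes\tau)\circ\sigma_G$; $\tau$ is unique because the $C_{2k}$ generate $\fA$. Uniqueness of $\fA$ up to a unique graded isomorphism then follows formally from the universal property.

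The step I expect to be the real obstacle is showing that $\sigma$ is an \emph{embedding}, i.e.\ that each structure map $C_k\to\varinjlim C_{2k}=\fA$ is injective, so that no graded piece $(\fG G)_S$ collapses in the limit. The formal left inverse one would like to use -- ``Berezin integration over $\theta_S$'' -- is \emph{not} an element of the $\RR$-linear span of Grassmann homomorphisms, so it is invisible to $\fG$; injectivity therefore has to be wrung out of condition (\ref{linear}) itself, the heuristic being that linearity is precisely the minimality that forbids multiplication by the $\iota(\theta_i)$ from manufacturing relations among the $C_k$ that are not already present. I would try to make this precise either by a normal-form/diamond argument on the presentation of $\fA$ in terms of the cells and their relations, or by producing, for each $k$, an auxiliary graded algebra $\fA'$ with a natural embedding $\sigma'$ into $\fG^{\fA'}$ for which $C_k\hookrightarrow\fA'$ is manifest; once injectivity of $\sigma$ is in hand, the remaining verifications are bookkeeping.
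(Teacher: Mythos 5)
Your proposal follows essentially the same route as the paper's four-part proof: extract the multidegree projections from the linear span of Grassmann homomorphisms using condition (\ref{linear}), decompose each $\fG\Lambda\RR^n$ accordingly, identify the components of equal cardinality, build $\fA$ as an inductive limit of top components with connecting maps given by multiplication with extra generators, define $\sigma_G$ component by component, and read off $\tau$ from the top component for universality. The differences are cosmetic: the paper obtains the projections $\rho_K=\fG E_K^n$ from products $P_I\prod_{i\in I}(\mathrm{id}-P_{I\setminus\{i\}})$ of coordinate-killing homomorphisms rather than by Lagrange interpolation, and its inductive system uses all maps $\iota_{k,n}$, $k\ge n$, rather than stepping by two (the even indices are cofinal, so the limits agree). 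One small correction: the sign homomorphisms $\theta_i\mapsto-\theta_i$ only force $\rho_K\bigl(\rho_J(a)\rho_{J'}(b)\bigr)=0$ unless $K=J\triangle J'$; to kill the products with $J\cap J'\neq\0$ you need the full rescalings $\chi_\lambda$, which you already have from the interpolation step (this is exactly how the paper proves its convolution identity $\rho_K(ab)=\sum_{J\subset K}\rho_J(a)\rho_{K\setminus J}(b)$).

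On the point you single out as the real obstacle -- injectivity of the structure maps $C_k\to\fA$, equivalently of the $\iota_n:\fA^n\to\fA$ and hence of $\sigma_G$ -- you should know that the paper's proof does not supply the argument you are looking for: its third part simply introduces ``the inductive limit of this system with injections $\iota_n$'' and the subsequent lemma verifies only naturality and the homomorphism property of $\sigma_{\Lambda\RR^n}$. Your observation that the would-be left inverse (Berezin integration, i.e.\ $E^{nk}_{\{1,\dots,n\},\{1,\dots,k\}}$ with $k>n$) lies outside the linear span of homomorphisms, so that no retraction is available inside the category, is accurate, and is precisely why this step is not a formality. So your proposal is not missing an idea that the paper contains; it is flagging a step the paper passes over without explicit justification.
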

$$
\begin{tikzpicture}
\matrix(M)[matrix of math nodes, row sep=2em, column sep=5em]{
& \fA_G\doteq\fG G &\\
G\ox \fA && G\ox \fA'\\
   };
\begin{scope}[every node/.style={midway,font=\footnotesize}]
\draw[->] (M-1-2) -- node[above left] {$\sigma_G$} (M-2-1);
\draw[->] (M-1-2) -- node[above right] {$\sigma'_G$} (M-2-3);
\draw[->] (M-2-1) -- node[below] {$\mathrm{id}\ox\tau$} (M-2-3);
\end{scope}
\end{tikzpicture}
$$

\noindent The proof of the theorem will be splitted into four parts.

\begin{proof}[First part of the proof] We construct $\fA$ together with natural embeddings \be\sigma_G:\fA_G\doteq\fG G\to G\otimes\fA \ ,\ee \ie \ injective homomorphisms satisfying
\be\label{eq:natur}
\sigma_{G'}\circ\mathfrak{G}\chi=\mathfrak{G}^{\mathfrak{\fA}}\chi\circ\sigma_G
\ee
for homomorphisms $\chi:G\to G'$.

We use the fact that any finite dimensional real Grassmann algebra is isomorphic to $\Lambda\RR^n$ for some $n\in\NN_0$. In a first step we study the linear hull of homomorphisms from $\Lambda\RR^n$ to $\Lambda\RR^m$. Let $\eta_i,i=1,\dots,n$ denote the generators of $\Lambda\RR^n$ and $\theta_j,j=1,\dots,m$ the generators of $\Lambda\RR^m$.
Then $\{\eta_I,I\subset\{1,\dots,n\}\}$ with $\eta_I=\prod_{i\in I}\eta_i$ is a basis of $\Lambda\RR^n$, and  $\{\theta_J,J\subset\{1,\dots,m\}\}$ with $\theta_J=\prod_{j\in J}\theta_j$ is a basis of $\Lambda\RR^m$.
\begin{lemma}
Let $\chi$ be a linear map from $\Lambda\RR^n$ to $\Lambda\RR^m$ with 
\be\label{eq:chi-linear}
\chi(\eta_I)=\sum_{J\subset\{1,\dots,m\}}c_{IJ}\theta_J\ .
\ee
$\chi$ is a linear combination of homomorphisms of Grassmann algebras if and only if
\be
c_{IJ}=0
\ee
whenever $|I|+|J|$ is odd or $|J|<|I|$.
\end{lemma}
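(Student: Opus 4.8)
The plan is to characterize the linear span of Grassmann homomorphisms $\Lambda\RR^n\to\Lambda\RR^m$ inside the full space of linear maps, by exhibiting explicit necessary parity/filtration constraints on the coefficient matrix $(c_{IJ})$ and then showing these constraints are also sufficient via a dimension count or an explicit spanning set. First I would record the two necessary conditions. A Grassmann homomorphism $\chi$ is determined by the images $\chi(\eta_i)=\sum_j b_{ij}\theta_j + (\text{higher degree})$; but in fact, since $\eta_i$ has degree $1$ and homomorphisms preserve the $\ZZ_2$-grading, $\chi(\eta_i)$ must be odd, and more importantly $\chi(\eta_I)=\chi(\eta_{i_1})\cdots\chi(\eta_{i_{|I|}})$ is a product of $|I|$ odd elements, hence lies in the ideal $\bigoplus_{k\geq |I|}\Lambda^k\RR^m$ and has definite parity $(-1)^{|I|}$. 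Thus for each individual homomorphism $\chi$: $c_{IJ}=0$ unless $|J|\geq|I|$ and $|J|\equiv|I|\pmod 2$, i.e.\ unless $|I|+|J|$ is even and $|J|\geq|I|$. These are linear constraints on the matrix, so they pass to the linear span, giving the ``only if'' direction immediately.

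For the ``if'' direction I would show that the maps satisfying $c_{IJ}=0$ whenever $|I|+|J|$ is odd or $|J|<|I|$ are spanned by homomorphisms. The cleanest route is to produce enough homomorphisms. I would use two families: (a) the \emph{linear} homomorphisms induced by linear maps $\RR^n\to\RR^m$, i.e.\ $\chi(\eta_i)=\sum_j A_{ij}\theta_j$ extended multiplicatively; their matrices realize, on the top-nonzero block $|J|=|I|$, all ``compound matrix'' entries, and one checks these already span the $|J|=|I|$ part after taking linear combinations (here I would invoke a polarization/Vandermonde argument: the entries of $\Lambda^k A$ as $A$ varies span all alternating $k$-tensors). (b) Affine-type homomorphisms $\chi(\eta_i)=v_i + \sum_j A_{ij}\theta_j$ with $v_i\in\Lambda^{\geq 2}\RR^m$ \emph{even}... no — $\eta_i$ is odd, so instead one uses $\chi(\eta_i)=\sum_j A_{ij}\theta_j + w_i$ with $w_i$ an odd element of degree $\geq 3$; composing/combining these lets one hit the strictly-off-diagonal blocks $|J|>|I|$. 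The bookkeeping is a descending induction on the ``defect'' $|J|-|I|$: given a target matrix supported in defects $\geq d$, subtract a suitable linear combination of homomorphisms matching its defect-$d$ part, reducing to defect $\geq d+2$.

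The main obstacle I expect is the sufficiency direction, specifically step (a): showing that linear combinations of the \emph{pure} homomorphisms (those coming from $\GL$- or $\mathrm{Mat}$-type data) already fill the diagonal blocks $|J|=|I|$. This is where one genuinely needs the enrichment/linearity viewpoint of Definition~\ref{def:Grass-functor}\eqref{linear}, and concretely it reduces to the statement that the matrices $\Lambda^k A$, for $A$ ranging over $m\times n$ matrices, linearly span $\mathrm{Hom}(\Lambda^k\RR^n,\Lambda^k\RR^m)$ — false for a single $A$ but true for the span, provable by differentiating $A\mapsto\Lambda^k A$ at $A=0$ in several directions, or by a direct Vandermonde argument on rank-one $A$'s. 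Once the diagonal blocks are under control, the induction on defect for the off-diagonal blocks is routine but requires care with signs coming from the graded-commutativity \eqref{eq:eta-a}; I would handle those signs by always writing basis elements $\theta_J$ in increasing order and tracking the permutation sign explicitly. A secondary, purely organizational obstacle is keeping the two index conventions ($I\subset\{1,\dots,n\}$ vs.\ $J\subset\{1,\dots,m\}$) straight throughout the induction.
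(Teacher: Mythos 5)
Your ``only if'' direction is exactly the paper's argument: a homomorphism preserves parity and sends $\eta_I$ to a product of $|I|$ odd elements, hence into $\bigoplus_{k\ge|I|}\Lambda^k\RR^m$, and both constraints are linear, so they pass to the span. But your assessment of where the difficulty lies is inverted. Step (a) is actually unproblematic: for distinct index pairs the $k\times k$ minors of $A$ have pairwise disjoint monomial supports as polynomials in the entries of $A$, so the compound matrices $\Lambda^k A$ span $\mathrm{Hom}(\Lambda^k\RR^n,\Lambda^k\RR^m)$, and a Vandermonde argument in a scaling parameter ($A\mapsto tA$) decouples the different degrees $k$. The genuine gap is in step (b): the descending induction on the defect $d=|J|-|I|$ does not close as you describe it. Any homomorphism built to produce a prescribed defect-$d$ contribution at a block $(I,J)$ --- say $\chi(\eta_{i_k})=\theta_{J_k}$ for a partition of $J$ into odd pieces, or more generally $\chi(\eta_i)=\ell_i+w_i$ with higher odd corrections --- also acts nontrivially on every $\eta_{I'}$ with $I'\subsetneq I$, and those spurious contributions land at defect $\sum_{i_k\in I'}(|J_k|-1)\le d$, i.e.\ at defect $d$ or \emph{below}, never at defect $\ge d+2$. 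So after subtracting your ``defect-$d$ matching'' combination the remainder is not supported in defects $\ge d+2$ and the induction stalls. Repairing this requires either a second, descending induction on $|I|$, or a mechanism that isolates the single component $\eta_I$ before mapping it anywhere.

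That isolation mechanism is exactly what the paper's proof supplies, and it makes the spanning and polarization machinery unnecessary. The restriction homomorphisms $P_{I'}$ (defined by $P_{I'}\eta_i=\eta_i$ for $i\in I'$ and $P_{I'}\eta_i=0$ otherwise) satisfy $P_{I'}P_{I''}=P_{I'\cap I''}$, so the inclusion--exclusion combination $E_I=P_I\prod_{i\in I}(\mathrm{id}-P_{I\setminus\{i\}})=\sum_{S\subset I}(-1)^{|S|}P_{I\setminus S}$ is a linear combination of homomorphisms that projects onto $\RR\,\eta_I$. Post-composing with the single homomorphism $\chi^{JI}$ determined by $\eta_{i_k}\mapsto\theta_{J_k}$, for a partition of $J$ into $|I|$ odd subsets (which exists precisely when $|J|\ge|I|$ and $|I|+|J|$ is even), yields the matrix unit $E_{JI}$, and every admissible $\chi$ is then $\sum_{I,J}c_{IJ}E_{JI}$. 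I would recommend adopting this two-step construction: it replaces the dimension count, the Vandermonde arguments and the defect induction by one explicit formula, and these same maps $P_I$, $E_I$, $E_{JI}$ are precisely what the remainder of the proof of Theorem \ref{thm:Grass} pushes through the functor $\fG$ as $\pi_K$, $\rho_K$ and the inclusion maps of the inductive system.
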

\begin{proof}
By definition, homomorphisms $\chi$ of Grassmann algebras preserve the degree mod 2, and $\chi(\eta_I)=\prod_{i\in I}\chi(\eta_i)$ has form degree at least $|I|$ if $\chi(\eta_I)\not=0$. This proves the \emph{only if} statement of the lemma.  

To prove the other direction we construct matrix units $E_{JI}$, $E_{JI}(\eta_{I'})=\delta_{II'}\theta_J$ for $|I|+|J|$ even and $|J|\ge|I|$ as linear combinations of homomorphisms. Obviously the given linear map $\chi$ \eqref{eq:chi-linear} can be written as
\be
\chi=\sum_{I\subset\{1,\dots,n\},\,J\subset\{1,\dots,m\}}c_{IJ}E_{JI}.
\ee
To show that $E_{JI}$ is a linear combination of homomorphisms of Grassmann algebras,
let $P_I$ be the homomorphism of $\Lambda\RR^n$ with $P_I\eta_i=\eta_i$ if $i\in I$ and $P_I\eta_i=0$ otherwise. Then
\be\label{eq:EI}
E_I\doteq P_I\prod_{i\in I}(\mathrm{id}-P_{I\setminus\{i\}})
\ee
projects onto the subspace of multiples of $\eta_I$. Given $I=\{i_1,\dots,i_{|I|}\}$ 
(with $i_1<\dots<i_{|I|}$) and $J$ with $|I|+|J|$ even and $|J|\geq |I|$,
let $(J_1,\dots,J_{|I|})$ be a partition of $J$ into odd subsets such that the indices in $J_k$ are smaller than those in $J_l$ if $1\le k<l\le|I|$, and consider the homomorphism $\chi^{JI}:\Lambda\RR^n\to\Lambda\RR^m$ with $\chi^{JI}(\eta_{i_k})=\theta_{J_k}$, 
$1\leq k\leq |I|$, and $\chi^{JI}(\eta_l)=0$ for $l\not\in I$. Hence, $\chi^{JI}(\eta_I)=\theta_J$. Then 
\be 
E_{JI}=\chi^{JI}\circ E_I
\ee
is a linear combination of homomorphisms.
\end{proof}
In the following we denote these matrix units by $E^{mn}_{JI}$ in order to indicate that they are mappings from $\Lambda\RR^n$ to $\Lambda\RR^m$; note that $E_I^n\doteq E_I$ \eqref{eq:EI} can be written as $E^n_I= E^{nn}_{II}$. 
Also the projections $P_I$ get an upper index $n$. Moreover, we extend the action of the functor $\fG$ to linear combinations of homomorphisms: $\fG(\sum_i\la_i\chi_i)\doteq\sum_i\la_i\,\fG\chi_i$.
We use the following notations:
\begin{equation}
    \begin{split}
        \pi_K&=\fG P^n_K\\
        \rho_K&=\fG E^n_K\\
    \end{split}
\end{equation}
The projections $\pi_\bullet$ satisfy the relation
\be\label{eq:projections}
\pi_K\pi_J=\pi_{K\cap J},
\ee
which shows that they commute with each other, and the definition \eqref{eq:EI} turns into
\be\label{eq:projectionsrho}
\rho_K=\pi_K\prod_{k\in K}(\mathrm{id}-\pi_{K\setminus\{k\}})\ .
\ee
The projections $\rho_\bullet$ form a direct sum decomposition of $\fA_{\Lambda\RR^n}$:
\begin{lemma} The projections $\rho_\bullet$ 
have the following properties:
\begin{itemize}
\item[$(i)$] Direct sum decomposition
\be
\rho_K\,\rho_J =\delta_{JK}\,\rho_K \label{eq:orthogonal}\ ,
\ee
\be
\sum_{K\subset\{1,\ldots,n\}}\rho_K =\mathrm{id}_{\fA_{\Lambda\RR^n}}\ . \label{eq:completeness}
\ee
\item[$(ii)$] Convolution
\be
\rho_K(ab) =\sum_{J\subset K}\rho_J(a)\rho_{K\setminus J}(b)\label{eq:product} \ .
\ee
\end{itemize}
\end{lemma}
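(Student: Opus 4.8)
The plan is to rewrite the family $\rho_\bullet$ in terms of the commuting projections $\pi_\bullet$ by M\"obius inversion over the Boolean lattice of subsets of $\{1,\dots,n\}$, and then treat $(i)$ and $(ii)$ separately. First I would expand the defining relation \eqref{eq:projectionsrho} and use \eqref{eq:projections} (in the form $\pi_K\pi_{K\setminus S}=\pi_{K\setminus S}$ for $S\subseteq K$) to obtain
\be\nonumber
\rho_K=\sum_{L\subseteq K}(-1)^{|K|-|L|}\pi_L\ ,\qquad\text{and, by M\"obius inversion,}\qquad \pi_K=\sum_{L\subseteq K}\rho_L\ .
\ee
The only further facts I would need are that each $\pi_L=\fG P^n_L$ is a unital $\ZZ_2$-graded algebra homomorphism of $\fA_{\Lambda\RR^n}$ (since $\fG$ takes values in $\mathfrak{Alg}^{\ZZ_2}$ and $P^n_L$ is a Grassmann homomorphism), and --- for part $(ii)$ only --- the linearity condition~(\ref{linear}) of Definition~\ref{def:Grass-functor}.

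For $(i)$ I would set $\sigma_j:=\pi_{\{1,\dots,n\}\setminus\{j\}}$. By \eqref{eq:projections} these are pairwise commuting idempotents with $\prod_{j\notin K}\sigma_j=\pi_K$ and $\pi_{K\setminus\{k\}}=\sigma_k\,\pi_K$ for $k\in K$; substituting into \eqref{eq:projectionsrho} and using $\pi_K^2=\pi_K$ gives the factorisation $\rho_K=\prod_{j\notin K}\sigma_j\cdot\prod_{k\in K}(\mathrm{id}-\sigma_k)$, a product of pairwise commuting idempotents. Then \eqref{eq:orthogonal} is immediate: for $K\neq J$ pick $i$ in the symmetric difference, so $\rho_K\rho_J$ has both $\sigma_i$ and $\mathrm{id}-\sigma_i$ among its commuting factors and hence vanishes, while $\rho_K^2=\rho_K$ by idempotency; and \eqref{eq:completeness} drops out of expanding $\mathrm{id}=\prod_{i=1}^n\bigl(\sigma_i+(\mathrm{id}-\sigma_i)\bigr)$ and collecting the terms by the set of indices for which the factor $\mathrm{id}-\sigma_i$ was chosen.

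For $(ii)$, using that the $\pi_L$ are algebra homomorphisms together with the two M\"obius identities, a short computation gives
\be\nonumber
\rho_K(ab)=\sum_{L\subseteq K}(-1)^{|K|-|L|}\pi_L(a)\,\pi_L(b)=\sum_{\substack{A,B\subseteq K\\ A\cup B=K}}\rho_A(a)\,\rho_B(b)\ ,
\ee
the last step by collapsing the alternating sum over the $L$ with $A\cup B\subseteq L\subseteq K$. To turn this union-indexed sum into the disjoint-union sum of \eqref{eq:product} I must discard all terms with $A\cap B\neq\0$, i.e.\ prove the nilpotency statement: if $\sigma_i(x)=\sigma_i(y)=0$ for some $i$, then $xy=0$. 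This is the one point where the bare lattice relations are not enough and condition~(\ref{linear}) must be fed in. The scaling map $D_i:\Lambda\RR^n\to\Lambda\RR^n$ with $D_i\eta_i=2\eta_i$ and $D_i\eta_j=\eta_j$ ($j\neq i$) is a Grassmann homomorphism, and reading off its action on the basis $\{\eta_I\}$ yields the \emph{linear} identity $D_i=2\,\mathrm{id}-P^n_{\{1,\dots,n\}\setminus\{i\}}$ of linear maps; hence~(\ref{linear}) gives $\fG D_i=2\,\mathrm{id}-\sigma_i$. Since $\fG D_i$ is an algebra homomorphism and $\sigma_i(xy)=\sigma_i(x)\sigma_i(y)=0$, evaluating $\fG D_i$ on $xy$ two ways gives $4xy=\fG D_i(x)\,\fG D_i(y)=\fG D_i(xy)=2xy$, so $xy=0$ (we are over a field of characteristic zero). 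Finally, for $i\in A\cap B$ the factorisation from part $(i)$ gives $\rho_A(a)=(\mathrm{id}-\sigma_i)\rho_A(a)$, and likewise for $\rho_B(b)$, so both are annihilated by $\sigma_i$ and their product vanishes; dropping these terms leaves exactly $\sum_{J\subseteq K}\rho_J(a)\,\rho_{K\setminus J}(b)$, which is \eqref{eq:product}.

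I expect the hard part to be precisely this last point in $(ii)$: recognising that \eqref{eq:projections}--\eqref{eq:projectionsrho} by themselves deliver only the coarser, union-indexed convolution, and that the minimality/linearity axiom has to be injected --- through the scaling homomorphism $\eta_i\mapsto 2\eta_i$, which is the functorial avatar of the Grassmann nilpotency of the auxiliary parameters --- in order to collapse it to the disjoint-union form. Everything else is M\"obius inversion on the Boolean lattice and routine sign bookkeeping.
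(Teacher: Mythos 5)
Your proof is correct, and it reaches the same two pillars as the paper's argument --- the relations among the $\pi_\bullet$ inherited by functoriality, and the linearity axiom~(\ref{linear}) applied to scaling homomorphisms --- but it organises them differently. For $(i)$ the paper simply observes that $E^n_K E^n_J=\delta_{JK}E^n_K$ and $\sum_K E^n_K=\mathrm{id}$ hold in $\mathrm{End}(\Lambda\RR^n)$ and pushes these matrix-unit relations through the linearly extended functor; your factorisation $\rho_K=\prod_{j\notin K}\sigma_j\cdot\prod_{k\in K}(\mathrm{id}-\sigma_k)$ into commuting idempotents proves the same thing working only with images of genuine homomorphisms, which makes visible that $(i)$ needs nothing beyond \eqref{eq:projections}--\eqref{eq:projectionsrho}. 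For $(ii)$ the paper uses the full $n$-parameter family $\chi_\lambda$ ($\eta_i\mapsto\lambda_i\eta_i$) and matches monomials in $\lambda$ to conclude $\rho_K\bigl(\rho_J(a)\rho_I(b)\bigr)=0$ unless $K=J\cup I$ with $J\cap I=\0$, and then checks that $\rho_K$ fixes the surviving products; you instead first derive the coarser union-indexed convolution $\rho_K(ab)=\sum_{A\cup B=K}\rho_A(a)\rho_B(b)$ by M\"obius inversion, using only that the $\pi_L$ are algebra homomorphisms, and then discard the terms with $A\cap B\neq\0$ via the single doubling homomorphism $\fG D_i=2\,\mathrm{id}-\sigma_i$. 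The two vanishing arguments are the same mechanism in different clothing (your $D_i$ is the paper's $\chi_\lambda$ with $\lambda_i=2$, $\lambda_j=1$), but your route has the merit of isolating exactly the one step where axiom~(\ref{linear}) is indispensable, at the cost of an extra layer of Boolean-lattice combinatorics.
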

\begin{proof} 
(i) Since $(E_K^n)_{K\subset\{1,\ldots,n\}}$ is precisely the set of projections onto the one dimensional subspaces of $\Lambda\RR^n$ corresponding to the basis $(\eta_K)_{K\subset\{1,\ldots,n\}}$, they satisfy $E_K^n\, E_J^n =\delta_{JK}\,E_K$ and
$\sum_{K}E_K^n =\mathrm{id}_{\Lambda\RR^n}$. Under application of the functor $\fG$, these relations are maintained; 
in particular, by definition of a functor it holds that $\fG(E_K^n\, E_J^n)=\rho_K\,\rho_J$ and $\fG(\mathrm{id}_{\Lambda\RR^n})=\mathrm{id}_{\fA_{\Lambda\RR^n}}$.

(ii) To prove \eqref{eq:product} we consider
the homomorphisms $\chi_{\lambda}$ of $\Lambda\RR^n$, $\lambda\in\RR^n$, given 
by the action $\eta_i\mapsto\lambda_i\eta_i$ on the generators. Obviously it holds that
\be\label{eq:lambda}
\chi_\la\,E_K^n=E_K^n\,\chi_\la=\la^K\,E_K^n\quad\text{with}\quad \lambda^K\doteq\prod_{k\in K}\lambda_k.
\ee
Looking at the pertinent homomorphism $(\fG\chi_\la)$ of $\fA_{\Lambda\RR^n}$ 
and using part 3 of Definition \ref{def:Grass-functor}, the formula \eqref{eq:lambda} turns into
\be
(\fG\chi_{\lambda})\,\rho_K=\rho_K\,(\fG\chi_{\lambda})=\lambda^{K}\,\rho_K \ .
\ee
Hence, we obtain
\be
(\fG\chi_{\lambda})\,\rho_{K}\bigl(\rho_J(a)\,\rho_I(b)\bigr)=
\rho_K\bigl((\fG\chi_{\lambda})\rho_J(a)\,\,(\fG\chi_{\lambda})\rho_I(b)\bigr),
\ee
which implies
\be
\lambda^K\,\rho_K\bigl(\rho_J(a)\,\rho_I(b)\bigr)=\lambda^J\,\lambda^I\,\rho_K\bigl(\rho_J(a)\,\rho_I(b)\bigr)\quad\forall\la\in\RR^n.
\ee
We conclude that
\be
\rho_K\bigl(\rho_J(a)\,\rho_I(b)\bigr)=0\quad\text{unless}\quad K=I\cup J,\,\, I\cap J=\0.
\ee
Therefore, by using also \eqref{eq:completeness}, we may write
\be
\rho_K(ab) =\sum_{J,I}\rho_K\bigl(\rho_J(a)\,\rho_I(b)\bigr)=
\sum_{J\subset K}\rho_K\bigl(\rho_J(a)\,\rho_{K\setminus J}(b)\bigr).
\ee
In view of the formula \eqref{eq:projectionsrho} for $\rho_K$, we note that
\be
\pi_K\bigl(\rho_J(a)\,\rho_{K\setminus J}(b)\bigr)=\rho_J(a)\,\rho_{K\setminus J}(b)\quad\text{for}\quad J\subset K,
\ee
and for $K_0\subsetneq K$
\be
\pi_{K_0}\bigl(\rho_J(a)\,\rho_{K\setminus J}(b)\bigr)=\pi_{K_0}\rho_J(a)\,\,\pi_{K_0}\rho_{K\setminus J}(b)=0
\ee
since at least one of the factors vanishes. So we arrive at
\be
\rho_K\bigl(\rho_J(a)\,\rho_{K\setminus J}(b)\bigr)=\rho_J(a)\,\rho_{K\setminus J}(b)
\ee
which completes the proof of \eqref{eq:product}.
\end{proof}

\paragraph{\it Second part of the proof.} Let $\fA^n \doteq\rho_{\{1,\dots,n\}}(\fA_{\Lambda\RR^n})$ be the subspace of the highest Grassmann degree elements. We have $a\in\fA^n$ iff $\pi_{\{1,\dots,n\}\setminus\{k\}}(a)=0$ for $1\le k\le n$. 
We define products 
\be
\fA^n\times\fA^m\to\fA^{n+m}
\ee
by
\be\label{eq:ab}
a\cdot b \doteq (-1)^{m\,\mathrm{dg}(a)}\fG\chi^{n+m}_{\{m+1,\dots,m+n\}}(a)\,\fG\chi^{n+m}_{\{1,\dots,m\}}(b)
\ee
where, for $J\equiv\{j_1,\dots,j_{|J|}\}\subset\{1,\ldots,n\}$, $\chi^n_J:\Lambda\RR^{|J|}\to \Lambda\RR^n$ is the homomorphism induced by $\eta_i\mapsto\eta_{j_i}$ with $j_1<j_2<\dots<j_{|J|}$.
The term on the right hand side of the equation is indeed an element of $\fA^{n+m}$. Namely we have
\be
P_K\circ\chi^n_J=\chi^n_J\circ P_{\{i|j_i\in K\}}
\ee  
hence
\begin{multline}
    \pi_{\{1,\dots,n+m\}\setminus\{k\}}(a\cdot b)\\
    =\pm\fG\chi^{n+m}_{\{m+1,\dots,m+n\}}\circ \pi_{\{1,\dots,n\}\setminus\{k-n\}}(a)
\cdot\fG\chi^{n+m}_{\{1,\dots,m\}}\circ\pi_{\{1,\dots, m\}\setminus\{k\}}(b)=0
\end{multline}
since for $k\le n$ the second and for $k>n$ the first factor vanishes.

The product is associative. This follows from a straightforward calculation. Let $a\in\fA^n$, $b\in\fA^m$ and $c\in\fA^k$. Then
\begin{multline}
    (a\cdot b)\cdot c=(-1)^{\mathrm{dg}(a)m+\mathrm{dg}(a)k+\mathrm{dg}(b)k}\\
    \cdot\fG\chi^{n+m+k}_{\{k+m+1,\dots,k+m+n\}}(a)\,\fG\chi^{n+m+k}_{\{k+1,\dots,k+m\}}(b)\,\fG\chi^{n+m+k}_{\{1,\dots,k\}}(c)
    =a\cdot(b\cdot c)\ .
\end{multline}

In the next step we define an inductive system 
\be
\fA^n\ni a\mapsto\iota_{k,n}(a)\doteq \eta_{1}\cdots\eta_{k-n}\fG\chi^{k}_{\{k-n+1,\dots,k\}}(a)\in\fA^{k}\, \ ,\ k\ge n
\ee
with $\iota_{k,n}\circ\iota_{n,m}=\iota_{k,m}$. If $k=n\ \mathrm{mod}\ 2$\,, we can also write
\be\label{eq:induction}
\iota_{k,n}=\fG E^{kn}_{\{1,\dots,k\},\{1,\dots,n\}}
\ee
with the matrix units defined before.

This system of embeddings is compatible with the product defined above:
\begin{lemma}\label{lemma:induction}
Let $a\in\fA^m$ and $b\in\fA^k$, hence $a\cdot b\in\fA^{m+k}$. For $n\geq m$ and $l\geq k$ it then holds that
\be
\iota_{n,m}(a)\cdot\iota_{l,k}(b)=\iota_{n+l,m+k}(a\cdot b)\ .
\ee
\end{lemma}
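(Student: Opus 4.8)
The plan is to unwind both sides of the claimed identity using the definition of the product on the graded pieces $\fA^n$ given in \eqref{eq:ab} and the definition of the embeddings $\iota_{k,n}$, and to reduce everything to functoriality of $\fG$ together with the linearity axiom (\ref{linear}) of Definition \ref{def:Grass-functor}. Concretely, the right-hand side $\iota_{n+l,m+k}(a\cdot b)$ is $\eta_1\cdots\eta_{n+l-m-k}\,\fG\chi^{n+l}_{\{n+l-m-k+1,\dots,n+l\}}(a\cdot b)$, and expanding $a\cdot b$ via \eqref{eq:ab} turns the inner factor $\fG\chi^{n+l}_{\cdots}(a\cdot b)$ into (up to the sign $(-1)^{k\,\dg(a)}$) a product of two terms of the form $\fG(\chi^{n+l}_{\cdots}\circ\chi^{m+k}_{\cdots})(a)$ and $\fG(\chi^{n+l}_{\cdots}\circ\chi^{m+k}_{\cdots})(b)$, since $\fG$ is a functor and hence $\fG\chi\circ\fG\chi' = \fG(\chi\circ\chi')$. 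I would then compute the left-hand side $\iota_{n,m}(a)\cdot\iota_{l,k}(b)$ the same way, again using \eqref{eq:ab} with $n$ replaced by $n$ and $m$ by $l$, and then expanding each $\iota$.

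First I would set up clean notation for the index blocks. For the left-hand side, after applying \eqref{eq:ab} we get, up to sign, $\fG\chi^{n+l}_{\{l+1,\dots,l+n\}}\bigl(\iota_{n,m}(a)\bigr)\cdot\fG\chi^{n+l}_{\{1,\dots,l\}}\bigl(\iota_{l,k}(b)\bigr)$, where the dot is now just multiplication in $\fG(\Lambda\RR^{n+l})$. Inserting $\iota_{n,m}(a)=\eta_1\cdots\eta_{n-m}\fG\chi^n_{\{n-m+1,\dots,n\}}(a)$ and pushing $\fG\chi^{n+l}_{\{l+1,\dots,l+n\}}$ through — using that $\fG\chi$ is an algebra homomorphism, that it sends $\iota_{\Lambda\RR^n}(\eta_i)$ to $\iota_{\Lambda\RR^{n+l}}(\chi(\eta_i))$ by naturality of $\iota$, and composing the two homomorphisms $\Lambda\RR^m\to\Lambda\RR^n\to\Lambda\RR^{n+l}$ via functoriality — I reduce this to an explicit expression of the form $\theta_{A}\,\fG\chi^{n+l}_{B}(a)\cdot\theta_{C}\,\fG\chi^{n+l}_{D}(b)$ for explicit index sets $A,B,C,D\subset\{1,\dots,n+l\}$, and similarly expand the right-hand side. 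The crucial combinatorial check is that these index sets match: on both sides $a$ ends up sitting on the last $m$ generators of the top block and $b$ on the $k$ generators just below, with the remaining $n+l-m-k$ generators carrying the $\eta$'s; I would verify that the block decomposition $\{1,\dots,n+l\}$ used by \eqref{eq:ab} at level $n+l$, $m+k$ is a refinement of the one obtained by first forming the product at level $n,l$ and then embedding.

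The genuinely delicate part, and the main obstacle, is bookkeeping the sign factors. The product \eqref{eq:ab} carries a Koszul sign $(-1)^{m\,\dg(a)}$ depending on the top Grassmann degrees of the two blocks, and each time an $\eta$-string is moved past an element of $\fA$ one picks up a sign via \eqref{eq:eta-a}; moreover $\iota_{n,m}$ changes the relevant ``$m$'' to ``$n$'' and so the naive degree used in the Koszul sign changes. I would carefully track the total parity contributions: the factor $(-1)^{k\,\dg(a)}$ from expanding $a\cdot b$ on the right, the factors $(-1)^{l\,\dg(\iota_{n,m}(a))}$ from \eqref{eq:ab} on the left, and the signs from commuting the $\eta_i\cdots$ prefactors (which have definite parity $n-m$, resp.\ $l-k$) past $b$, resp.\ $a$ — and check that, because $n\equiv m$ and $l\equiv k$ are not assumed, one actually has to be a bit careful, but the parities of $\dg(\iota_{n,m}(a))$ and $\dg(a)$ differ by $n-m$, and similarly for $b$, and everything conspires to cancel. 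Once the sign ledger balances and the index sets are shown to agree, the identity follows from functoriality of $\fG$ and axiom (\ref{linear}); no further input about the specific algebra $\fA$ is needed.
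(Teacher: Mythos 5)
Your overall strategy---expand both sides via \eqref{eq:ab} and the definition of $\iota_{k,n}$, push the $\fG\chi$'s through by functoriality and naturality of $\iota$, and compare index blocks and signs---is exactly the route the paper takes, and your sign inventory (the Koszul sign from \eqref{eq:ab}, the graded commutation \eqref{eq:eta-a}, the parity shift $\dg(\iota_{n,m}(a))=(\dg(a)+n-m)\bmod 2$) is the right one. However, the step on which you hang the whole argument, namely that ``these index sets match,'' fails. Carrying out your own expansion, the left-hand side becomes, up to sign, $\eta_1\cdots\eta_{l-k}\,\eta_{l+1}\cdots\eta_{l+n-m}\,\fG\chi^{n+l}_{\{l+n-m+1,\dots,l+n\}}(a)\,\fG\chi^{n+l}_{\{l-k+1,\dots,l\}}(b)$, while the right-hand side is, up to sign, $\eta_1\cdots\eta_{n+l-m-k}\,\fG\chi^{n+l}_{\{l+n-m+1,\dots,l+n\}}(a)\,\fG\chi^{n+l}_{\{n+l-m-k+1,\dots,n+l-m\}}(b)$. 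The $a$-blocks agree, but the $b$-block and the set of indices carrying the $\eta$'s do not (they coincide only when $n=m$): on the left, $b$ sits on $\{l-k+1,\dots,l\}$, separated from the top block by the $\eta$'s with indices $l+1,\dots,l+n-m$, not ``just below'' the $a$-block. Functoriality alone cannot identify two such expressions with different index blocks, and the residual discrepancy between your sign ledger, $(-1)^{k(n-m+\dg(a))}$, and the right-hand side's $(-1)^{k\,\dg(a)}$---namely $(-1)^{k(n-m)}$---does not cancel by itself.

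The missing idea is the specific consequence of the linearity condition (\ref{linear}) that the paper extracts at this point: a permutation $p$ of the generators of $\Lambda\RR^{n+l}$ induces an automorphism $\chi_p$ with $\chi_p\circ E^{n+l}_{\{1,\dots,n+l\}}=\mathrm{sign}(p)\,E^{n+l}_{\{1,\dots,n+l\}}$ as linear maps, hence by condition (\ref{linear}) $\fG\chi_p$ acts on $\fA^{n+l}=\rho_{\{1,\dots,n+l\}}(\fA_{\Lambda\RR^{n+l}})$ as multiplication by $\mathrm{sign}(p)$; in other words, elements of $\fA^{n+l}$ are totally antisymmetric under relabelling of the generator indices. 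Applying the permutation that moves the block $\{l-k+1,\dots,l\}$ past $\{l+1,\dots,l+n-m\}$ simultaneously realigns the index sets \emph{and} contributes exactly $\mathrm{sign}(p)=(-1)^{k(n-m)}$, which closes the sign ledger. You do invoke axiom (\ref{linear}) in passing, but without this permutation step your comparison of the two expanded sides cannot be completed, so as written the proof has a genuine gap.
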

\begin{proof}
We insert the definitions of the embeddings and the product and obtain, for the left hand side,
\begin{equation}\label{eq:i(a)i(b)}
\iota_{n,m}(a)\cdot\iota_{l,k}(b)=\epsilon\,\eta_1\dots\eta_{l-k}\eta_{l+1}\dots\eta_{l+n-m}\,
\fG\chi^{n+l}_{\{l+n-m+1,\dots,l+n\}}(a)\,\fG\chi^{n+l}_{\{l-k+1,\dots,l\}}(b)
\end{equation}
with $\epsilon=(-1)^{k(n-m+\mathrm{dg}(a))}$, and for the right hand side
\begin{equation}\label{eq:i(ab)}
\iota_{n+l,m+k}(a\cdot b)= \epsilon'\,\eta_1\dots\eta_{n+l-m-k}\, 
\fG\chi^{n+l}_{\{l+n-m+1,\dots,l+n\}}(a)\,\fG\chi^{n+l}_{\{n+l-m-k+1,\dots,n+l-m\}}(b)
\end{equation}
with $\epsilon'=(-1)^{\mathrm{dg}(a)k}$. 
Finally we use that any element of $\fA^{n+l}$ is totally antisymmetric under a permutation of the indices of the $\eta$'s,
again due to part 3 of definition \ref{def:Grass-functor}. Hence,
applying the permutation
\be
p=\begin{pmatrix}(l-k+1)&\cdots&l&(l+1)&\cdots&(l+n-m)\\
(l-k+1+n-m)&\cdots&(l+n-m)&(l+1-k)&\cdots&(l+n-m-k)\end{pmatrix}
\ee
to \eqref{eq:i(a)i(b)} we indeed obtain \eqref{eq:i(ab)}, since $\,\mathrm{sign}(p)=(-1)^{k(n-m)}$.
\end{proof}
\paragraph{\it Third part of the proof.} We use now Lemma \ref{lemma:induction} and define $\fA$ as the inductive limit of this system with injections $\iota_n:\fA^n\to\fA$ such that 
\be
\iota_k\circ\iota_{k,n}=\iota_n\quad \text{for}\quad k\geq n
\ee 
and where the product is defined by
\be\label{eq:product-A}
\iota_n(a)\cdot\iota_m(b)\doteq\iota_{n+m}(a\cdot b)\quad\text{for}\quad a\in\fA^n,\,b\in\fA^m\ .
\ee
We equip $\fA$ with a grading such that 
\be\label{eq:degree-embedding}
\mathrm{dg}(\iota_n(a))\doteq(\mathrm{dg}(a)+n)\ \mathrm{mod}\,2\ .
\ee

It remains to construct the embeddings $\sigma_G:\fA_G\to G\otimes\fA$. Again it is sufficient to consider the case $G=\Lambda\RR^n$, $n\in\NN_0$. For $J\equiv\{j_1,\ldots,j_{|J|}\}\subset\{1,\ldots,n\}$ (with $j_1<j_2<\ldots <j_{|J|}$)
let $\chi_n^J:\Lambda\RR^n\to\Lambda\RR^{|J|}$ denote the homomorphism induced by $\eta_{j_i}\mapsto\eta_i$ and $\eta_k\mapsto0$ if $k\not\in J$. (Note the relations $\chi_n^J\circ\chi_J^n=\mathrm{id}_{\Lambda\RR^{|J|}}$ and $\chi^n_J\circ\chi^J_n=P^n_J$.)
Then we define  
\be\label{eq:sigma}
\sigma_{\Lambda\RR^n}(a)\doteq\sum_{J\subset\{1,\dots,n\}}
\eta_J\otimes\iota_{|J|}\circ\fG\chi_n^J\circ\rho_J(a)\ .
\ee

\begin{lemma}  $\sigma_{\Lambda\RR^n}$ has the following properties:
\begin{itemize}
    \item[$(i)$] It satisfies the naturality condition \eqref{eq:natur}.
    \item[$(ii)$] It is a homomorphism of graded algebras.
\end{itemize}
\end{lemma}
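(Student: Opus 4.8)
The plan is to verify the two claimed properties of $\sigma_{\Lambda\RR^n}$ by direct computation, exploiting the structural properties of the projections $\rho_K$ and $\pi_K$ established in the previous two lemmas, together with the linearity/enrichment condition (part \ref{linear} of Definition \ref{def:Grass-functor}). The two parts are essentially independent: part $(i)$ is a naturality diagram chase, part $(ii)$ rests on the ``convolution'' formula \eqref{eq:product} for $\rho_K$ and the definition \eqref{eq:ab} of the product on the $\fA^n$.

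\smallskip
\noindent\textit{Part $(i)$.} Since every finite-dimensional Grassmann algebra is isomorphic to some $\Lambda\RR^n$ and every homomorphism is a composition of the elementary ones (generator renamings $\chi^n_J$, the ``coordinate projections'' $\chi_\la$, and the surjections $\chi_n^J$), it suffices to check \eqref{eq:natur} on a generating class of morphisms $\chi:\Lambda\RR^n\to\Lambda\RR^m$. For such a $\chi$ one computes $\fG^{\fA}\chi\circ\sigma_{\Lambda\RR^n}(a)=\sum_{J}\chi(\eta_J)\otimes\iota_{|J|}\circ\fG\chi_n^J\circ\rho_J(a)$ and compares with $\sigma_{\Lambda\RR^m}\circ\fG\chi(a)=\sum_{K}\eta_K\otimes\iota_{|K|}\circ\fG\chi_m^K\circ\rho_K\circ\fG\chi(a)$. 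The identity reduces to a bookkeeping statement about how $\chi$ intertwines the $\rho$'s on source and target, i.e. to relations of the form $\rho_K\circ\fG\chi=\fG\chi\circ(\text{combination of }\rho_J)$, which one gets by applying $\fG$ to the corresponding identities among the genuine linear maps $E^n_J$, $\chi$ on the Grassmann algebras themselves; here the enrichment condition \eqref{linear} is what lets us apply $\fG$ to linear combinations consistently. A clean way to organize this is to first treat the generator-renaming morphisms $\chi^n_J$ (where the match is immediate since $\chi^n_J$ sends $\eta$-basis elements to $\eta$-basis elements and the inductive maps $\iota$ are compatible by Lemma \ref{lemma:induction}), then the surjections $\chi_n^J$ (using $\chi_n^J\circ\chi_J^n=\mathrm{id}$ and $\chi^n_J\circ\chi^J_n=P^n_J$), and finally note these generate all of $\mathfrak{Grass}$.

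\smallskip
\noindent\textit{Part $(ii)$.} Here one must show $\sigma_{\Lambda\RR^n}(a)\cdot\sigma_{\Lambda\RR^m}(b)=\sigma_{\Lambda\RR^{n+m}}(a\cdot b)$ in $G\otimes\fA$ (after identifying $\Lambda\RR^n\otimes\Lambda\RR^m$ with $\Lambda\RR^{n+m}$ appropriately), and that $\sigma$ respects the $\ZZ_2$-grading --- the latter is immediate from \eqref{eq:degree-embedding} and the fact that $\rho_J(a)$ contributes a factor $\eta_J$ of degree $|J|$ together with a component in $\fG$ of the complementary parity. For the multiplicativity, expand the left side using the product on $\fG^{\fA}$: the $(J,J')$-term is $(-1)^{\dg(\cdot)|J'|}\,\eta_J\eta_{J'}\otimes \bigl(\iota_{|J|}\fG\chi_n^J\rho_J(a)\bigr)\bigl(\iota_{|J'|}\fG\chi_m^{J'}\rho_{J'}(b)\bigr)$, and the product in $\fA$ is computed via \eqref{eq:product-A} and \eqref{eq:ab}. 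On the right side, expand $\sigma_{\Lambda\RR^{n+m}}(a\cdot b)$ and push $\rho_K$ through the product $a\cdot b$ using the convolution formula \eqref{eq:product}, so that $\rho_K(a\cdot b)$ becomes a sum over $K=J\sqcup J''$ of $\rho_J$-type pieces of $a$ and $b$. Matching the two expansions is then a matter of checking that the sign factors (from \eqref{eq:ab}, from the graded tensor product, and from reordering the $\eta$'s, all controlled by part \ref{linear}) and the index-shift conventions in $\chi^n_J$, $\chi_n^J$ agree term by term.

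\smallskip
\noindent\textit{Main obstacle.} The conceptual content is light; the real work --- and the place where an error is most likely to hide --- is the sign bookkeeping in part $(ii)$: one is juggling the Koszul sign of the graded tensor product $G\otimes\fA$, the $(-1)^{m\,\dg(a)}$ in the definition \eqref{eq:ab} of the product on $\bigoplus_n\fA^n$, and the permutation signs incurred when splitting $\eta_K$ into $\eta_J\eta_{K\setminus J}$ and reordering generators, exactly as in the proof of Lemma \ref{lemma:induction}. The cleanest route is to reuse the normalization already fixed there: reduce both sides to expressions of the shape $\eta_{\{1,\dots,r\}}\otimes \fG\chi^{\,\cdot}_{\cdot}(a)\,\fG\chi^{\,\cdot}_{\cdot}(b)$ with all the $\eta$'s collected in canonical order, and then the signs are forced to match because Lemma \ref{lemma:induction} already records the value of the relevant permutation sign. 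A secondary subtlety is the standing identification ``$\Lambda\RR^n\otimes\Lambda\RR^m\cong\Lambda\RR^{n+m}$'': one should fix it once (say, generators of the first factor coming first) and use the same convention in \eqref{eq:ab}, \eqref{eq:sigma}, and throughout, so that the $\chi^n_J$ appearing on the two sides of $(ii)$ are literally the same maps.
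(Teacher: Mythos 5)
Your part $(i)$ rests on the claim that the generator renamings $\chi^n_J$, the diagonal scalings $\chi_\lambda$ and the surjections $\chi_n^J$ generate all of $\mathfrak{Grass}$, and this is false: every map in that list sends generators to scalar multiples of generators, hence so does any composition of them, whereas a general arrow of $\mathfrak{Grass}$ may send a generator to an arbitrary odd element, e.g.\ $\eta_1\mapsto\theta_1\theta_2\theta_3$ or $\eta_1\mapsto\theta_1+\theta_2\theta_3\theta_4$. (Such morphisms are not exotic here --- the maps $\chi^{JI}$ used to build the matrix units in the first lemma are exactly of this kind.) So the reduction to your generating class does not cover the general case. The paper instead verifies \eqref{eq:natur} for an arbitrary $\chi$ directly: it writes $\rho_K\circ\fG\chi\circ\rho_J=c_{JK}\,\fG E^{mn}_{KJ}$ using the decomposition $\chi=\sum_{I,J}c_{IJ}E^{mn}_{JI}$ into matrix units (this is where the linearity condition (\ref{linear}) enters), and then matches the two sides of \eqref{eq:natur} through the identities $\chi_m^K\circ E^{mn}_{KJ}=E^{|K|,n}_{\{1,\dots,|K|\},J}$ and $\iota_{|K|,|J|}=\fG E^{|K|,|J|}_{\{1,\dots,|K|\},\{1,\dots,|J|\}}$. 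If you want to keep a reduction argument, the correct spanning set is the matrix units $E^{mn}_{JI}$ themselves, since both sides of \eqref{eq:natur} are linear in $\chi$ once $\fG$ and $\fG^{\fA}$ are extended by linearity; it is not the class of generator-preserving morphisms.

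In part $(ii)$ you have also mis-stated the claim: what must be shown is $\sigma_{\Lambda\RR^n}(ab)=\sigma_{\Lambda\RR^n}(a)\,\sigma_{\Lambda\RR^n}(b)$ for $a,b$ in the \emph{same} algebra $\fA_{\Lambda\RR^n}=\fG(\Lambda\RR^n)$, the product on the right being taken in the graded tensor product $\Lambda\RR^n\otimes\fA$. The identity $\sigma_{\Lambda\RR^n}(a)\cdot\sigma_{\Lambda\RR^m}(b)=\sigma_{\Lambda\RR^{n+m}}(a\cdot b)$ you propose to prove conflates $\fA_{\Lambda\RR^n}$ with its top-degree subspace $\fA^n$, and in that restricted setting it is essentially Lemma \ref{lemma:induction}, which is already established. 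Your computational outline --- expand both sides, use the convolution formula \eqref{eq:product} for $\rho_K(ab)$ and \eqref{eq:product-A}, \eqref{eq:ab} for the product of the images --- is the right one for the correct statement, but the paper closes the sign bookkeeping not by brute normalization: the reordering of $\eta_L\eta_J$ into $\eta_K$ is implemented by the permutation automorphism $\chi_{JL}$, and the final identification with $\sigma_{\Lambda\RR^n}(ab)$ is obtained by invoking the naturality proved in part $(i)$ applied to $\chi_{JL}$. Thus $(ii)$ genuinely depends on $(i)$, a dependence absent from your plan, and with your part $(i)$ incomplete the argument for $(ii)$ does not close either.
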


\begin{proof}
$(i)$ Let $\chi$ be a homomorphism from $\Lambda\RR^n$ to $\Lambda\RR^m$. For the right hand side of \eqref{eq:natur} we obtain
\begin{align}\nonumber
\fG^\fA\chi\bigl(\sigma_{\Lambda\RR^n}(\rho_J(a))\bigr)&=
\chi(\eta_J)\otimes\iota_{|J|}\circ\fG\chi_n^J\circ\rho_J(a)\\
&=\sum_{K\subset\{1,\dots,m\}}
c_{JK}\,\theta_K\otimes\iota_{|J|}\circ\fG\chi_n^J\circ\rho_J(a),\label{eq:rhs}
\end{align}
by using \eqref{eq:GA} and \eqref{eq:chi-linear}; we recall that $\chi_{JK}$ is nonvanishing only if $|K|-|J|\in \{0,2,4,\ldots\}$.
Inserting the definitions into the left hand side we get
\begin{equation}\label{eq:lhs}
\sigma_{\Lambda\RR^m}\bigl(\fG\chi(\rho_J(a))\bigr)=
\sum_{K\subset\{1,\dots,m\}}
\theta_K\otimes\iota_{|K|}\circ\fG\chi_m^K\circ\rho_K\bigl(\fG\chi(\rho_J(a))\bigr).
\end{equation}
Both expressions are equal, namely for \eqref{eq:lhs} we use
\be
\rho_K(\fG\chi)\rho_J=\fG(E^m_K\chi E^n_J)=c_{JK}\fG E^{mn}_{KJ}
\ee
and
\be
\chi_m^K\circ E^{mn}_{KJ}=E^{|K|,n}_{\{1,\dots,|K|\},J}\ ,
\ee
So we obtain that \eqref{eq:lhs} is equal to
\be
\sum_K c_{JK}\,\theta_K\otimes\iota_{|K|}\circ\fG E^{|K|,n}_{\{1,\dots,|K|\},J}(a).
\ee
For \eqref{eq:rhs} we indeed obtain the same result, by inserting properties of the inductions $\iota$,
\be
\iota_{|J|}=\iota_{|K|}\circ\iota_{|K|,|J|},
\ee
\be
\iota_{|K|,|J|}=\fG E^{|K|,|J|}_{\{1,\dots|K|\},\{1,\dots,|J|\}}
\ee
by using that $|K|-|J|\in \{0,2,4,\ldots\}$, and finally
\be
E^{|K|,|J|}_{\{1,\dots|K|\},\{1,\dots,|J|\}}\chi_n^J E^n_J=E^{|K|,n}_{\{1,\dots,|K|\},J}\ .
\ee
$(ii)$
The degree is preserved,  $\mathrm{dg}(\sigma_{\Lambda\RR^n}(a))=\mathrm{dg}(a)$, as a consequence of \eqref{eq:degree-embedding}.
To prove that also the product is preserved we use \eqref{eq:product}
and find
\begin{align}\label{eq:s(ab)}
\sigma_{\Lambda\RR^n}(a\,b)&=\sum_{K\subset\{1,\dots,n\}}\sum_{J\subset K} 
\eta_K\otimes\iota_{|K|}\circ(\fG\chi_n^K)\bigl(\rho_J(a)\,\rho_{K\setminus J}(b)\bigr)\nonumber\\
&=\sum_{K\subset\{1,\dots,n\}}\sum_{J\subset K}\sigma_{\Lambda\RR^n}\bigl(\rho_J(a)\,\rho_{K\setminus J}(b)\bigr)\ .
\end{align}
On the other hand we have
\begin{multline}\label{eq:s(a)s(b)}
\sigma_{\Lambda\RR^n}(a)\cdot\sigma_{\Lambda\RR^n}(b)\\
=\sum_{J,L\subset\{1,\dots,n\}}
(-1)^{\mathrm{dg}(a)|L|}\,\eta_L\eta_J\otimes
\bigl(\iota_{|J|}\circ\fG\chi_n^J\circ\rho_J(a)\bigr)\cdot\bigl(\iota_{|L|}\circ\fG\chi_n^L\circ\rho_L(b)\bigr)
\end{multline}
where only disjoint pairs $L,J$ contribute, since otherwise $\eta_L\eta_J=0$.

Using \eqref{eq:product-A} and setting $K\doteq J\cup L$, we 
have 
\be\label{eq:s(a)s(b)1}
\bigl(\iota_{|J|}\circ\fG\chi_n^J\circ\rho_J(a)\bigr)\cdot\bigl(\iota_{|L|}\circ\fG\chi_n^L\circ\rho_L(b)\bigr)=
\iota_{|K|}\bigl((\fG\chi_n^J)\circ\rho_J(a)\cdot
(\fG\chi_n^{L})\circ\rho_{L}(b)\bigr) \ ,
\ee
and by \eqref{eq:ab} we get
\be\label{eq:s(a)s(b)2}
(\fG\chi_n^J)\circ\rho_J(a)\cdot (\fG\chi_n^{L})\circ\rho_{L}(b)=(-1)^{\mathrm{dg}(a)|L|}\,\fG(\chi^{K}_{n}\circ\chi_{JL})(\rho_J(a)\rho_L(b)).
\ee
where $\chi_{JL}$ is the automorphism of $\Lambda\RR^{n}$ which is induced by a permutation $p_{JL}$ on the indices of its generators. 
$p_{JL}\in S_{n}$ maps $(l_1,\ldots,l_{|L|},j_1,\ldots,j_{|J|})$ into $(k_1,\ldots,k_{|K|})$ and acts trivially on the remaining indices. 
Here 
$K= J\cup L=\{k_1,\ldots,k_{|K|}\}$
with $k_1<\dots < k_{|K|}\,$, $J=\{j_1,\ldots,j_{|J|}\}$ with $j_1<\dots < j_{|J|}$
and $L=\{l_1,\ldots,l_{|L|}\}$ with $l_1<\dots < l_{|L|}$.

We insert \eqref{eq:s(a)s(b)2} and \eqref{eq:s(a)s(b)1} into \eqref{eq:s(a)s(b)} and obtain
\be\label{eq:s(a)s(b)3}
\sigma_{\Lambda\RR^n}(a)\cdot\sigma_{\Lambda\RR^n}(b)=\sum_{J,L\subset\{1,\dots,n\},J\cap L=\0}\eta_L\eta_J\otimes\iota_{|K|}\circ\fG(\chi_n^{K}\circ\chi_{JL})(\rho_J(a)\rho_L(b))\ .
\ee
Since 
\be
\eta_K=\chi_{JL}(\eta_L\eta_J)\ \ee
and since $\rho_K$ acts trivially on $\rho_J(a)\rho_L(b)$ and commutes with $\fG\chi_{JL}$, 
we may write \eqref{eq:s(a)s(b)3} as (notice that $L=K\setminus J$)
\begin{align}
&\sum_{K\subset\{1,\dots,n\}}\sum_{J\subset K}\chi_{JL}^{-1}(\eta_K)\otimes \iota_{|K|}\circ\fG(\chi_n^{K})\circ\rho_K\bigl((\fG\chi_{JL})(\rho_J(a)\rho_L(b))\bigr)\nonumber\\
&\quad=\sum_{K\subset\{1,\dots,n\}}\sum_{J\subset K}(\fG^\fA\chi_{JL}^{-1})\circ
\sigma_{\Lambda\RR^n}\bigl((\fG\chi_{JL})(\rho_J(a)\rho_L(b))\bigr)\ .
\end{align}
The latter expression coincides with \eqref{eq:s(ab)} by the naturality of $\sigma_{\Lambda\RR^n}$.
\end{proof}

\paragraph{\it Fourth part of the proof.} 
To complete the proof of the Theorem, we still have to verify the statement about the universality of $\fA$.
Let $\fA'$ be a graded algebra and $\sigma'$ a natural transformation from $\fG$ to $\fG^{\fA'}$. Taking into account that for any $a\in\fA$ there is an $n\in\NN$ such that $a=\iota_n(a_0)$ for some uniquely fixed $a_0\in\fA^n$ and that for this $n$ the definition
\eqref{eq:sigma} gives $\sigma_{\Lambda\RR^n}(a_0)=\eta_{\{1,\dots,n\}}\otimes  a$,
we define a homomorphism $\tau:\fA\to\fA'$ by
\be
\eta_{\{1,\dots,n\}}\otimes \tau(a)=\sigma_{\Lambda\RR^n}'(a_0)\ ,\,\,a_0\in\fA^n\ .
\ee
For an arbitrary $b\in\fA_{\Lambda\RR^n}$ we easily check 
\begin{align}
    (\mathrm{id}\otimes\tau)\circ\sigma_{\Lambda\RR^n}(b)
&=\sum_J(\mathrm{id}\otimes\tau)\circ\sigma_{\Lambda\RR^n}(\rho_J (b))\nonumber\\
&=\sum_J
(\mathrm{id}\otimes\tau)
(\eta_J\otimes\iota_{|J|}(\fG\chi^J_n\circ\rho_J(b)))\nonumber\\
&=\sum_J
\eta_J\otimes\tau\bigl(\iota_{|J|}(\fG\chi^J_n\circ\rho_J(b))\bigr)\\
&=\sum_J\fG^{\fA'}\chi^n_J\circ\sigma'_{\Lambda\RR^{|J|}}\circ\fG\chi_n^J(\rho_J(b))\nonumber\\
&=\sum_J\sigma'_{\Lambda\RR^n}(\rho_J(b))
=\sigma'_{\Lambda\RR^n}(b)\nonumber
\end{align}
where the second last equality follows from the naturality of $\sigma'$.\end{proof}


\section{The algebra of Fermi fields}\label{sec:Axioms}
We choose now $V=\Gamma(M,E)$ where $M$ is a globally hyperbolic spacetime and denote by $V_c$ its subspace of compactly supported sections.  $V$ is interpreted as the space of field configurations.
Let $\Floc$ be the space of local 
fermionic functionals on $V$, and let $L$ denote a generalized fermionic Lagrangian on $V$, \ie\ a map $C_0^{\infty}(M)\ni f\mapsto L(f)\in\Floc$ with $\supp L(f)\subset\supp f$ and with $L(f+g+f')=L(f+g)-L(g)+L(g+f')$  if $\supp f\cap\supp f'=\0$. 
We restrict ourselves to generalized Lagrangians that lead to Green hyperbolic \cite{GreenBear}  equations of motion.

We construct a covariant Grassmann multiplication algebra  $\mathfrak{G}:\mathfrak{Grass}\to\mathfrak{Alg}^{\ZZ_2}$ in the sense of definition \ref{def:Grass-functor}.
The algebras $\fA_G\equiv\fG G$ are generated by invertible elements%
\footnote{$S_G(F)$ can be expanded into a finite combination of (products of) Grassmann variables; 
such a combination is invertible if and only if the coefficient of $1_{G}$ is invertible.}
$S_G(F)$ with $F\in G\otimes\Floc$ with the following properties and relations:

\begin{itemize}
\item (Parity) $S_G(F)$ is even for even $F$.
\item (Naturality) If $\chi:G\to G'$ is a homomorphism of Grassmann algebras 
then
\be
S_{G'}\circ \fG^{\Floc}\chi=\mathfrak{G}\chi\circ S_G\ .
\ee
$$
\begin{tikzpicture}
\matrix(M)[matrix of math nodes, row sep=1.5em, column sep=1.5em]{
G\ox \Floc&&&  G'\ox \Floc\\
&&&\\
\fA_G &&& \fA_{G'}\\
   };
\begin{scope}[every node/.style={midway,font=\footnotesize}]
\draw[->] (M-1-1) -- node[above] {$\fG^{\Floc}\chi$} (M-1-4);
\draw[->] (M-1-1) -- node[left] {$S_G$} (M-3-1) ;
\draw[->] (M-3-1) -- node[below] {$\fG\chi$} (M-3-4);
\draw[->] (M-1-4) -- node[right] {$S_{G'}$} (M-3-4) ;
\end{scope}
\end{tikzpicture}
$$
\item (Quantization condition) $S_G(\eta)=\iota_G(e^{i\eta})$ for $\eta\in G$.
\item (Causal factorization) 
\be\label{eq:caus-fact}
S_G(F_1+F_2+F_3)=S_G(F_1+F_2)S_G(F_2)^{-1}S_G(F_2+F_3)
\ee
for even functionals $F_1,F_2,F_3$ with $\supp F_1\cap J_-(\supp F_3)=\0$ where $J_-$ denotes the past of the region in the argument. 
\item (Dynamics)
Let $\vec h=\sum_{i\in I} \eta_i {\vec h}^i$ with odd elements $\eta_i\in G$, $\vec h^i \in V_c$ and $I\in\Ind$.%
\footnote{At variance with the notations in \eqref{eq:shift-F}, the Grassmann algebra $G$ considered here contains the Grassmann variables 
appearing in both the unshifted argument $\exp\sum\eta_i\vr^i$ and the shift $\vect{h}$.} 
Then
\be\label{eq:dynamics} 
S_G(F)=S_G(F^{\vec h}+\delta_{\vec h}L)
\ee
where 
\be
\delta_{\vec h}L=L(f)^{\vec h}-1_G\otimes L(f)
\ee
with $f\equiv 1$ on $\supp \vec h$ and the unit $1_G$ of $G$.
\end{itemize}
Note that the Quantization condition implies $S_G(0)=1_{\fA_G}$. Setting $F=0$ in the relation Dynamics, we obtain
\be
S_G(\delta_{\vec h}L)=1_{\fA_G},
\ee
which is characteristic for the \emph{on-shell} algebra, cf.~\cite{BF19} and Sect.~\ref{sec:perturbative}.
We apply now Theorem \ref{thm:Grass} and 
obtain a graded algebra $\fA$ and embeddings $\sigma_G:\fA_G\to G\otimes \fA$.

We still have to equip our algebras with an antilinear involution. On a real Grassmann algebra $\Lambda V$ over some real vector space $V$ we define an involution by $\vr^*=\vr$ for $\vr\in \Lambda^1(V)=V$, for linear maps $A$ from $\Lambda V$ to some graded *-algebra by 
\be\label{eq:starr} 
A^*(\omega)=(-1)^{\dg(A)\dg(\omega)}A(\omega^*)^*\ ,\ \omega\in\Lambda V
\ee
and for the tensor product $G\otimes \fA$ of a Grassmann algebra $G$ with a graded *-algebra $\fA$ we set 
\be
(\eta\otimes a)^*=(-1)^{\dg(\eta)\dg(a)}\eta^*\otimes a^*\ ,\ \eta\in G\ ,\ a\in \fA\ .
\ee 
For a covariant Grassmann multiplication algebra $\fG$ we require that the algebras $\fG G$ are *-algebras and the embeddings $\iota_G:G\to\fG G$ are *-homomorphisms. The algebras $\fA_G=\fG G$ defined by the axioms above obtain a *-operation by  $S_G(F)^*=S_G(F^*)^{-1}$.
The subspaces $\fA^n\subset \fA_{\Lambda\RR^n}$ are invariant under the *-operation. The involution on the inductive limit $\fA$ is induced by 
\begin{equation}\label{eq:star}
    \iota_n(a)^*\doteq(-1)^{n(n-1)/2+n(\dg(a)+n)}\iota_n(a^*)\ .
\end{equation}
Indeed, since for $a\in \fA^n$, $b\in\fA^m$ equation \eqref{eq:ab}
implies that
\begin{equation}
    (a\cdot b)^*=(-1)^{m\,\dg(a)+n\,\dg(b)+nm}\,b^*\cdot a^*\ ,
\end{equation}
the involution satisfies the condition
\begin{equation}
    (\iota_n(a)\iota_m(b))^*=\iota_m(b)^*\iota_n(a)^*\ .
\end{equation}
We observe that $(\sigma_G)_G$ then is a family of *-homomorphisms. Namely, let $G=\Lambda\RR^n$ and
$\fA_{\Lambda\RR^n}\ni a=\rho_J(a)$ for some $J\subset\{1,\dots,n\}$. Using that $\dg(\eta_J)=|J|$,
$\eta_J^*=(-1)^{|J|\,(|J|-1)/2}\eta_J$, $\dg(\iota_{|J|}\circ\fG\chi_n^J(a))=(\dg(a)+|J|)\,\mathrm{mod}\,2\,$
and $(\fG\chi_n^J(a))^*=\fG\chi_n^J(a^*)$, we obtain
\begin{align}
 \sigma_{\Lambda\RR^n}(a)^* & =\bigl(\eta_J\otimes(\iota_{|J|}\circ\fG\chi_n^J(a))\bigr)^*\nonumber\\
 &=
 (-1)^{|J|\,(\dg(a)+|J|)}\,
\eta_J^*\otimes(\iota_{|J|}\circ\fG\chi_n^J(a))^*\nonumber\\
&=(-1)^{|J|\,(\dg(a)+|J|)+|J|(|J|-1)/2}\,\eta_J\otimes(\iota_{|J|}\circ\fG\chi_n^J(a))^*\\
&=\eta_J\otimes(\iota_{|J|}\circ\fG\chi_n^J(a^*))\nonumber\\
&=\sigma_{\Lambda\RR^n}(a^*)\ .\nonumber
\end{align}
Hence, $\sigma_G\circ S_G$ behaves under the $*$-operation equally to $S_G$, to wit,
$\sigma_G\bigl(S_G(F)\bigr)^*=\sigma_G\bigl(S_G(F^*)\bigr)^{-1}$.
The involution on $\fA$ is universal, in the sense that the homorphism $\tau$ in Theorem \ref{thm:Grass} is a *-homomorphism provided $\sigma'$ preserves the *-structure.  

In the following we omit the symbols $\sigma_G$ by identifying $\fA_G$ with a subalgebra of $G\otimes \fA$. 

Note that the ideal of $G\otimes\mathfrak{A}$ generated by the generators of $G$ is annihilated by every positive linear functional on $G\otimes\mathfrak{A}$. 
\section{Canonical anticommutation rules}\label{sec:CAR}
We specialize now to the Dirac field on Minkowski space for simplicity, the generalization to globally hyperbolic spacetimes being straigthforward (see, e.g. \cite{DHP09}).  The space of field configurations $h\in V$ is the space of smooth sections of the spinor bundle, equipped with a nondegenerate Lorentz invariant sesquilinear form $(u,\vr)\mapsto\overline{u}\vr$ on each fiber. (Note that 
$\overline{u}$ does not mean complex conjugation, see \eqref{eq:sesquilin}.) We may choose
$ V=C^{\infty}(\MM,\CC^4)$ with the $\mathrm{Spin}(2)\equiv
\mathrm{SL}(2,\CC)$ action on $\CC^4$ by the matrix representation
\be
\mathrm{SL}(2,\CC)\ni A\mapsto \left(\begin{array}{cc}
A&0\\
0&(A^*)^{-1}
\end{array}\right)
\ee
which corresponds to the choice of $\gamma$-matrices
\be
\gamma_0=\left(\begin{array}{cc}
0&1\\
1&0
\end{array}\right)
\ ,\
\gamma_i=\left(\begin{array}{cc}
0&\sigma_i\\
-\sigma_i&0
\end{array}\right)\ ,\ i=1,2,3\ .
\ee
The sesquilinear form is obtained from the standard scalar product $(\cdot,\cdot)$ on $\CC^4$ by
\be\label{eq:sesquilin}
\overline{u}\vr=(u,\gamma_0 \vr)
\ee
The $\gamma$-matrices are then hermitian with respect to the sesquilinear form.

For compactly supported sections we can define a sesquilinear form by
\be\label{eq:sesqui}
\langle h_1,h_2\rangle=\int dx\,\overline{h_1(x)}h_2(x)\ .
\ee
The classical Dirac field $\psi$ is the evaluation functional
\be
\psi(x): V\to\CC^4;\quad  \psi(x)[h]\doteq  h(x)
\ee
and the conjugate field $\ovl\psi$ maps the configuration into the dual space
\be
\ovl\psi(x)\,:\,  V\to(\CC^4)^*;\quad  \ovl\psi(x)[h_1](\vr)\doteq \ovl{h_1(x)}\vr\ .
\ee
Smeared fields are defined as usual, that is, $\psi(s)[h]\doteq\langle s,h\rangle$, where $s\in V_c$ is a test section of the spinor bundle, and $\ovl\psi(s)[h]\doteq\langle h,s\rangle$. Note that according to \eqref{eq:starr} we have $\psi(s)^*=-\ovl{\psi}(s)$.

The Dirac Lagrangian 
$L=\,\ovl\psi\wedge\dd\psi$ with the Dirac operator $\dd=i\gamma\partial-m$ associates to any compactly supported test function $f$  a 2-form $L(f)$ on $  V$, namely 
\be\label{eq:Lagrangian}
L(f)[h_1,h_2]=\langle fh_1,\dd (fh_2)\rangle-\langle fh_2,\dd (fh_1)\rangle\ .
\ee 
Note that $\dd$ is hermitian with respect to the sesquilinear form $\langle\cdot,\cdot\rangle$, hence $L(f)$ takes imaginary values. 

We want to use the (free) Dirac Lagrangian for constructing a covariant Grassmann multiplication algebra $\fG$, \ie\ the local S-matrices in Minkowski spacetime as in the previous section, and the relation Dynamics and the Causal factorization
to derive the anticommutation relations. 

To this end we need
to extend the used functionals to $G$-valued functionals by \eqref{eq:G-extension}. 
We have for $\eta\in G$, $s,h\in V_c$
\be
\psi(s)_G[h\eta]=\psi(s)[h]\eta=\langle s,h\rangle\eta
\ee
and 
\be
\ovl{\psi}(s)_G[h\eta]=\ovl{\psi}(s)[h]\eta=\langle h,s\rangle\eta
\ee
This suggests to extend the sesquilinear form $\langle\cdot,\cdot\rangle$ to a $G\otimes \CC$ valued map $\langle\cdot,\cdot\rangle_G$ on $(G\otimes V_c)\times (G\otimes V_c)$ by
\begin{equation}\label{eq:sesqui-G}
    \langle \eta h,h'\eta'\rangle_G=\eta\langle h,h'\rangle\eta'
\end{equation}
for $h,h'\in V_c$ and $\eta,\eta'\in G$.
We may also extend the fields $\psi$ and $\ovl \psi$ to test
sections $\eta_i s^i\in G\otimes V_c$ by
\begin{equation}
    \psi_G(\eta s)[h\eta']=\eta\psi(s)[h]\eta'=\langle \eta s,h\eta'\rangle_G
\end{equation}
and
\begin{equation}\label{eq:conjugation}
    \ovl{\psi}_G(\eta s)[h\eta']=\eta\ovl{\psi}(s)[h]\eta'=(-1)^{\dg(\eta)\dg(\eta')+\dg(\eta)+\dg(\eta')}\langle h\eta',\eta s\rangle_G
\end{equation}
hence
\begin{equation}
    \psi_G(\eta s)=\eta\psi_G(s)\ ,\ \ovl{\psi}_G(\eta s)=\eta\ovl{\psi}_G(s)
\end{equation}
The extended Lagrangian $L(f)_G$ (with spacetime cutoff $f$) is a quadratic form on even elements of $G\otimes V_c$. Namely, let $h=\sum h^i\eta_i$ with $h^i\in V$ and odd elements $\eta_i\in G$. Then 
\begin{equation}\label{eq:Lagr-G}
    L(f)_G[e^h]=\frac12 L(f)_G[hh]=\frac12\sum L(f)[h^i\wedge h^j]\eta_j\eta_i=\langle fh,\dd fh\rangle_G\ .
\end{equation}
The variation under a shift $\vect{h}=\sum_{i\in I} \vect{h}^i\theta_i$, with odd elements $\theta_i\in G$, $\vect{h}^i\in  V_c$
is then a sum of a linear and a constant functional, namely
\begin{equation}\label{eq:dL-G}
    \delta_{\,\vect{h}}L_G[e^h]=\delta_{\,\vect{h}}L_G[1+h]=\langle \vect{h},\dd h\rangle_G+\langle h,\dd \vect{h}\rangle_G+\langle\vect{h},\dd\vect{h}\rangle_G\ .
\end{equation}
Since $\dd$ is selfadjoint with respect to $\langle\cdot,\cdot\rangle$ we have
\begin{equation}
    \langle \vect{h},\dd h\rangle_G=\langle \dd\vect{h}, h\rangle_G
\end{equation}
and hence, using \eqref{eq:conjugation}
\begin{equation}\label{eq:delta-L-1}
    \delta_{\,\vect{h}}L_G=\psi_G(\dd\vect{h})-\ovl{\psi}_G(\dd\vect{h})+\langle\vect{h},\dd\vect{h}\rangle_G\ .
\end{equation}
Let now $s\in (G\otimes V_c)_\mathrm{even}$ 
and let
\be \label{eq:Diracfield}
\fD_G(s)\doteq \psi_G(s)-\overline{\psi}_G(s)
\ee
be the smeared \emph{classical} ``doubled Dirac field'' viewed as an element in 
$(G\otimes\Floc)_\mathrm{even}$.

%
%
%

\begin{proposition}
Let  $s=\sum_{i=1}^n \eta_i s^i$ with $s^i\in V_c$ and $\eta_i$ odd elements of $G$.
The S-matrix $S_G$ built with the doubled Dirac field has the expansion
\be
S_G\bigl(\fD_G( s)\bigr)=1_{\fA}+\sum_{k=1}^n\frac{i^k}{k!}
\sum_{i_1<\dots<i_k}\eta_{i_k}\dots\eta_{i_1}B_k(s^{i_1}\wedge\dots\wedge s^{i_k})
\ee
with  $\RR$-multilinear alternating maps $B_k: V_c^k\to\fA$, $k=1,\dots,n$, (the time ordered products of the doubled Dirac field).%

\end{proposition}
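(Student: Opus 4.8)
The only input I would use is the Naturality axiom, the consequence $S_G(0)=1_{\fA_G}$ of the Quantization condition, the embeddings $\sigma_G:\fA_G\hookrightarrow G\otimes\fA$ of Theorem \ref{thm:Grass}, and three elementary features of the doubled field: $\fD_G(0)=0$, the map $s\mapsto\fD_G(s)$ is $\RR$-linear, and $\fD$ is natural, $\fD_{G'}\circ(\chi\otimes\mathrm{id})=\fG^{\Floc}\chi\circ\fD_G$ for $\chi:G\to G'$. Causal factorization and Dynamics play no role here; the statement is purely structural.

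\emph{Reduction to $G=\Lambda\RR^n$.} Odd elements of a Grassmann algebra pairwise anticommute and square to zero, so there is a homomorphism $\phi:\Lambda\RR^n\to G$ carrying the generators $\varepsilon_i$ to $\eta_i$. With $s_0\doteq\sum_i\varepsilon_i s^i$ one has $s=(\phi\otimes\mathrm{id})s_0$, hence by naturality of $\fD$ and Naturality of $S$, $S_G(\fD_G(s))=\fG\phi\bigl(S_{\Lambda\RR^n}(\fD_{\Lambda\RR^n}(s_0))\bigr)$; under the $\sigma$-identifications $\fG\phi$ acts as $\phi\otimes\mathrm{id}$, which sends $\varepsilon_i\mapsto\eta_i$, so it suffices to prove the claimed expansion for $G=\Lambda\RR^n$ with the $\eta_i$ its generators. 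Embedding by $\sigma_{\Lambda\RR^n}$, write $S_{\Lambda\RR^n}(\fD_{\Lambda\RR^n}(s))=\sum_{I\subset\{1,\dots,n\}}\eta_I\,a_I$ uniquely, with $\eta_I=\prod_{i\in I}\eta_i$ and $a_I\in\fA$; since each $\eta_i$ occurs to at most first power, only $|I|\le n$ contributes.

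\emph{Identifying the coefficients.} Feeding well-chosen Grassmann homomorphisms into Naturality: (i) the projection $\eta_j\mapsto\eta_j$ ($j\in I$), $\eta_j\mapsto 0$ ($j\notin I$) shows $a_I$ depends only on $(s^i)_{i\in I}$; (ii) the augmentation $\Lambda\RR^n\to\RR$, together with $\fD_\RR(0)=0$ and $S_\RR(0)=1$, gives $a_\emptyset=1_\fA$; (iii) the rescalings $\eta_i\mapsto\mu_i\eta_i$ with $\mu_i\in\RR$ give degree-one homogeneity of $a_I$ in each argument (and explain why only $\RR$-, not $\CC$-, linearity is to be expected); (iv) working in $\Lambda\RR^{n+1}$ with one auxiliary generator $\eta_{n+1}$ and the element $\hat s=\sum_{i\le n,\,i\ne i_0}\eta_i s^i+\eta_{i_0}s^{i_0}+\eta_{n+1}t^{i_0}$, the three homomorphisms $\eta_{n+1}\mapsto 0$; $\eta_{i_0}\mapsto 0$ with $\eta_{n+1}\mapsto\eta_{i_0}$; and $\eta_{n+1}\mapsto\eta_{i_0}$ keeping $\eta_{i_0}$, send $\hat s$ to the elements whose $i_0$-slot is $s^{i_0}$, $t^{i_0}$, $s^{i_0}+t^{i_0}$ respectively, and comparison of the $\eta_I$-coefficients yields $a_I(\dots,s^{i_0}+t^{i_0})=a_I(\dots,s^{i_0})\pm a_I(\dots,t^{i_0})$, the sign being forced to $+$ by taking $t^{i_0}=s^{i_0}$ and using (iii); thus each $a_I$ is $\RR$-multilinear in $(s^i)_{i\in I}$. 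Finally (v) the generator permutations $\eta_i\mapsto\eta_{\pi(i)}$: those stabilising $I$ show $a_I$ is alternating in its arguments, while the order-preserving bijection $\{1,\dots,k\}\to I$ shows that, as a function of $k$ ordered sections, $a_I$ depends only on $k=|I|$.

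\emph{Assembly and the obstacle.} Hence there is a single alternating $\RR$-multilinear $A_k:V_c^k\to\fA$ with $a_I=A_k(s^{i_1},\dots,s^{i_k})$ for every $I=\{i_1<\dots<i_k\}$, and $A_k$ descends to $V_c^{\wedge k}$. Using $\eta_{i_1}\cdots\eta_{i_k}=(-1)^{k(k-1)/2}\eta_{i_k}\cdots\eta_{i_1}$ and setting $B_k\doteq k!\,(-i)^k(-1)^{k(k-1)/2}\,A_k$, the expansion $\sum_I\eta_I a_I=1_\fA+\sum_{k\ge 1}\sum_{i_1<\dots<i_k}\eta_{i_1}\cdots\eta_{i_k}a_I$ becomes exactly the asserted one (with the $\eta_i$ the given odd elements of $G$, via the reduction step). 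The main obstacle is step (iv): homogeneity is immediate, but genuine additivity really does require the auxiliary-generator construction, and throughout (i)--(v) one must carefully track the reordering signs produced by $\fG\chi$ --- equivalently by $\chi\otimes\mathrm{id}$ on the basis $\{\eta_I\}$ of $\Lambda\RR^n\otimes\fA$ --- with the specialisation $t^{i_0}=s^{i_0}$ serving to pin down the one sign the bookkeeping leaves open.
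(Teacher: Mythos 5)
Your argument is correct and follows essentially the same route as the paper's proof: reduce to $G=\Lambda\RR^n$ via Naturality, expand in the basis $\eta_I$, and then use rescaling homomorphisms, permutations of generators, and an auxiliary $(n{+}1)$-st generator mapped onto an existing one to establish homogeneity, antisymmetry, independence of $I$, and additivity of the coefficients. Your sign bookkeeping and the conversion $B_k = k!\,(-i)^k(-1)^{k(k-1)/2}A_k$ reproduce the paper's normalization exactly, and the specialization $t^{i_0}=s^{i_0}$ is a legitimate (if slightly more roundabout) way to fix the one sign the paper handles implicitly.
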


\begin{proof}
Let $\chi:\Lambda\RR^n\to G$ denote the homomorphism which acts on the generators of $\Lambda\RR^n$ by $\chi(\theta_i)=\eta_i$. Then
by the naturality of $S$ we have
\begin{equation}
    S_G(\fD_G(s))=(S_G\circ\fG^{\Floc}\chi)(\fD_{\Lambda\RR^n}(\sum_{i=1}^n\theta_is^i))=(\fG\chi\circ S_{\Lambda\RR^n})(\fD_{\Lambda\RR^n}(\sum_{i=1}^n\theta_is^i))
\end{equation}
hence it suffices to treat the case $G=\Lambda\RR^n$ with generators $\eta_i,i=1,\dots,n$.
By assumption, $S_G(\fD_G(\sum_{i=1}^n \eta_is^i))$ takes values in $\Lambda\RR^n\otimes \fA$, hence it is of the form
\be
S_G\bigl(\fD_G(\sum_{i=1}^n \eta_is^i)\bigr)=\sum_{I\subset\{1,\dots,n\}}\eta_IB^I(s^1,\dots,s^n)
\ee
with $B^I(s^1,\dots,s^n)\in \fA$. 

Let $\chi_\lambda$, $\lambda\in\RR^n$ denote the homomorphism of $\Lambda\RR^n$ induced by $\eta_i\mapsto\lambda_i\eta_i$. Then by the naturality of $S$ we get
\be
B^{I}(\lambda_1s^1,\dots\lambda_ns^n)=\lambda^IB^I(s^1,\dots,s^n)
\ee
hence $B^I$ depends only on the variables $s_i,\,i\in I$ and is homogeneous of degree 1 in every entry.  In particular,
for $\la=0$ we obtain $B^{\0}=S_G(0)=1$. 
Moreover, as a function on $k=|I|$ variables, $B^I$ does not depend on the choice of $I$.
We set 
\be
\frac{i^k}{k!}\,B_k(s^k,\dots, s^1)\doteq B^{\{1,\dots,k\}}(s^1,\dots,s^k).
\ee
Replacing $\chi$ by a permutation $p\in S_n$ of the generators, we find 
\be
\sum_{i_1<\dots<i_m}\eta_{p(i_m)}\dots\eta_{p(i_1)}B_m(s^{i_1},\dots,s^{i_m})=
\sum_{i_1<\dots<i_m}\eta_{i_m}\dots\eta_{i_1}B_m(s^{p^{-1}(i_1)},\dots,s^{p^{-1}(i_m)})
\ee
for all $1\leq m\leq n$. Let $p$ be such that it acts nontrivially only on $\{1,\dots,k\}$, \ie, $p(j)=j$ for all $k<j\leq n$.
Identifying the coefficients of $\eta_k\dots\eta_1$ by using $\eta_{p(k)}\dots\eta_{p(1)}=(-1)^{\mathrm{sign}(p)}\eta_k\dots\eta_1$,
we see that $B_k$ is totally antisymmetric.

It remains to prove that is $B_k$ is additive in every entry. We have
\be
S_G\bigl(\fD_G(\sum_{i=1}^{k+1}\eta_i s^i)\bigr)=1_{\fA}+\sum_{m=1}^{k+1}\frac{i^m}{m!}\sum_{i_1<\dots<i_m}\eta_{i_m}\dots\eta_{i_1}B_m(s^{i_1}\wedge\dots\wedge s^{i_m})\ .
\ee  
We now choose the homomorphism $\chi$ which maps $\eta_{k+1}$ to $\eta_k$ and leaves all other generators invariant. 
Identifying again the coefficients of $\eta_k\dots\eta_1$, we find
\be
B_k(s^1\wedge\dots\wedge (s^{k}+s^{k+1}))=B_k(s^1\wedge\dots\wedge s^{k})+B_k(s^1\wedge\dots\wedge s^{k-1}\wedge s^{k+1})\ .
\ee
\end{proof}

We now use $f=\eta s$ as the smearing object for $\fD$, with $s\in V_c$ and $\eta$ a generator of $G$. The involution on $\fA_G$ is defined by $S_G(\fD_G(\eta s))^*=S_G(\fD_G(\eta s)^*)^{-1}$, and $\fD(s)=\psi(s)-\ovl\psi(s)$
is selfadjoint.  The above Proposition implies
\begin{equation}
    S_G(\fD_G(\eta s))^*=1-i B_1(s)^*\eta
\end{equation}
and
\begin{equation}
    S_G(\fD_G(\eta s)^*)^{-1}=S_G(\fD_G(-\eta s))^{-1}=(1-i\eta B_1(s))^{-1}=1+i\eta B_1(s)
\end{equation}
Since $B_1(s)$ anticommutes with $\eta$, it is selfadjoint. We decompose it in its complex linear and antilinear parts,

\be\label{eq:QuantizedDiracField}
B_1(s)=\Psi(s)^*+\Psi(s)\ , \quad \Psi(s)\in\fA\ .
\ee

We interpret $\Psi$ as the \emph{quantized} Dirac field; it is an $\fA$-valued \emph{antilinear} functional on $V_c$.

\begin{theorem}\label{theo:CAR}
The quantized Dirac field $\Psi$ satisfies the canonical anticommutation rules over $V_c$:
\be\label{eq:CAR}
\{\Psi(s^1)^*,\Psi(s^2)^*\}=\{\Psi(s^1),\Psi(s^2)\}=0\ ,\ \{\Psi(s^1),\Psi(s^2)^*\}=\langle s^2,i\S s^1\rangle 1_{\fA}\ ,
\ee
where 
\be\label{eq:S-Delta}
\S=(i\gamma\partial+m)\Delta
\ee
with $\Delta$ the commutator function of the scalar theory.%
\footnote{Instead of the usual notation $S,S^R,S^\pm,S^F$ for the propagators of the Dirac field, we write
$\S,\S^R,\S^\pm,\S^F$, because the letter '$S$' is reserved for the $S$-matrices. With regard to the factors $(-1),i$ and $2\pi$
in the definition of these propagators, we use the conventions given in \cite[App.~A.2]{Due19}.}
\end{theorem}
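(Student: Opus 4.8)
The strategy is to extract the anticommutation relations from the two genuinely dynamical inputs on the $S$-matrices: the relation \emph{Dynamics} \eqref{eq:dynamics}, which feeds in the free field equation, and \emph{Causal factorization} \eqref{eq:caus-fact}, which degenerates time-ordered products to ordered operator products on causally separated supports. All computations are carried out inside $G=\Lambda\RR^{n}$ for $n\in\{1,2\}$, which by Naturality is enough. I use the expansion of the preceding Proposition and write $T(u^{1},u^{2})\in\fA$ for the coefficient of $\eta_{2}\eta_{1}$ in $S_{G}(\fD_{G}(\eta_{1}u^{1}+\eta_{2}u^{2}))$; thus $T$ is $\RR$-bilinear and alternating, and $T=-\tfrac12 B_{2}$.

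\emph{Step 1: the field equation.} First I would show $B_{1}\circ\dd=0$. In $G=\Lambda\RR^{1}$, apply Dynamics to $F=\fD_{G}(\eta_{1}s)$ with the shift $\vec h=\eta_{1}\vec h^{1}$ built from the \emph{same} generator. Since $\fD_{G}$ is linear in the field and $L$ is quadratic, $\fD_{G}(\eta_{1}s)^{\vec h}=\fD_{G}(\eta_{1}s)+\fD_{G}(\eta_{1}s)[\vec h]$ and, by \eqref{eq:dL-G}--\eqref{eq:delta-L-1}, $\delta_{\vec h}L=\fD_{G}(\dd\vec h)+\langle\vec h,\dd\vec h\rangle_{G}$; but both of these scalar terms vanish because each contains a factor $\eta_{1}^{2}=0$. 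Hence $S_{G}(\fD_{G}(\eta_{1}s))=S_{G}(\fD_{G}(\eta_{1}(s+\dd\vec h^{1})))$, and comparing the coefficients of $\eta_{1}$ gives $B_{1}(s)=B_{1}(s+\dd\vec h^{1})$. Together with additivity and $B_{1}(0)=0$ this yields $B_{1}(\dd g)=0$ for all $g\in V_{c}$; in particular, $B_{1}(s)$ does not change if the support of $s$ is moved by a $\dd$-shift.

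\emph{Step 2: the anticommutator is a scalar.} Causal factorization with $F_{2}=0$ gives, whenever $\supp u^{1}\cap J_{-}(\supp u^{2})=\0$, $S_{G}(\fD_{G}(\eta_{1}u^{1}+\eta_{2}u^{2}))=S_{G}(\fD_{G}(\eta_{1}u^{1}))\,S_{G}(\fD_{G}(\eta_{2}u^{2}))$, whence $T(u^{1},u^{2})=-B_{1}(u^{1})B_{1}(u^{2})$; symmetrically $T(u^{1},u^{2})=B_{1}(u^{2})B_{1}(u^{1})$ when $\supp u^{2}\cap J_{-}(\supp u^{1})=\0$. Next, run Dynamics in $G=\Lambda\RR^{2}$ with $F=\fD_{G}(\eta_{1}s^{1}+\eta_{2}s^{2})$ and the shift $\vec h=\eta_{1}\vec h^{1}$ involving only the first generator: then only the $\eta_{2}s^{2}$-part of $F$ survives in the constant $C=\fD_{G}(\eta_{2}s^{2})[\eta_{1}\vec h^{1}]$, which is a scalar multiple of $\eta_{2}\eta_{1}$, and using $S_{G}(F'+C)=e^{iC}S_{G}(F')$ (which follows from causal factorization with a constant term and the Quantization condition) the coefficient of $\eta_{2}\eta_{1}$ yields that $T(\dd\vec h^{1},s^{2})$ is a \emph{c-number} --- namely $-i$ times the scalar in $C$ --- for every $\vec h^{1}\in V_{c}$. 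Finally, for arbitrary $s^{1},s^{2}\in V_{c}$, use Step 1 to pick $g_{\pm}\in V_{c}$ (concretely $g_{+}=-\chi_{+}\S^{R}s^{1}$ and $g_{-}=-\chi_{-}\S^{A}s^{1}$ with cutoffs $\chi_{\pm}\equiv1$ near $\supp s^{1}\cup\supp s^{2}$) so that $s^{1}+\dd g_{+}$ is supported in the future and $s^{1}+\dd g_{-}$ in the past of $\supp s^{2}$. Applying the two causal formulas and $\RR$-bilinearity, $-B_{1}(s^{1})B_{1}(s^{2})-B_{1}(s^{2})B_{1}(s^{1})=T(s^{1}+\dd g_{+},s^{2})-T(s^{1}+\dd g_{-},s^{2})=T(\dd(g_{+}-g_{-}),s^{2})\in\CC\,1_{\fA}$, so $\{B_{1}(s^{1}),B_{1}(s^{2})\}$ is a scalar for all $s^{1},s^{2}$.

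\emph{Step 3: evaluation and splitting.} The scalar is read off from $C$: on $\supp s^{2}$ one has $g_{+}-g_{-}=-(\S^{R}-\S^{A})s^{1}=-\S s^{1}$ with $\S=(i\gamma\partial+m)\Delta$, so --- keeping track of the signs in \eqref{eq:conjugation} and of the adjointness of $\S$ relative to $\langle\cdot,\cdot\rangle$ --- one obtains $\{B_{1}(s^{1}),B_{1}(s^{2})\}=\bigl(\langle s^{2},i\S s^{1}\rangle+\langle s^{1},i\S s^{2}\rangle\bigr)1_{\fA}$, the symmetric sesquilinear pairing attached to $i\S$. Now substitute the decomposition $B_{1}(s)=\Psi(s)^{*}+\Psi(s)$ with $\Psi$ antilinear, so that $B_{1}(\lambda s)=\lambda\Psi(s)^{*}+\bar\lambda\Psi(s)$, into $\{B_{1}(\lambda s^{1}),B_{1}(\mu s^{2})\}$ and match the four homogeneity monomials in $\lambda,\bar\lambda,\mu,\bar\mu$: the right-hand side has no $\lambda\mu$- and no $\bar\lambda\bar\mu$-term, which forces $\{\Psi(s^{1})^{*},\Psi(s^{2})^{*}\}=\{\Psi(s^{1}),\Psi(s^{2})\}=0$, while the mixed terms recover the remaining relation in \eqref{eq:CAR} (and its adjoint). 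The hard part is Step 2: it is there that Dynamics does the essential work --- the nilpotency $\eta^{2}=0$ being exactly what kills every operator-valued correction to $T(\dd g,s^{2})$ --- and it is there that one must carry out the causal-shift construction with retarded/advanced Green's operators and cutoffs to sandwich an arbitrary $s^{1}$ between the future and the past of $\supp s^{2}$; matching the resulting scalar with $i\S$ precisely, tracking the factors of $i$ supplied by the Quantization condition and the sign conventions for $\fD$ and the propagators $\S^{R},\S^{A}$, is the one remaining bit of care.
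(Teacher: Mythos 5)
Your proof is correct, and it rests on exactly the same two mechanisms as the paper's argument --- writing the smearing section as a causally displaced piece plus $\dd$ applied to a cut-off Green-operator image, feeding the $\dd$-part into the relation Dynamics, and ordering the product via Causal factorization --- but it assembles them differently. The paper works at the level of the unitaries: it derives the Weyl-type exchange relation $S_G(\fD_G(f))S_G(\fD_G(g))=S_G(\fD_G(f+g))S_G(E(f,g))$ in one pass (Dynamics, Causal factorization, Dynamics again) and reads off the anticommutator by comparing the two orderings of the product, the advanced propagator entering only implicitly through $\S=\S^R-(\S^R)^*$. You instead isolate three coefficient-level facts --- the free field equation $B_1\circ\dd=0$, the causal-ordering formulas $T=\mp B_1B_1$, and the centrality of $T(\dd g,\cdot)$ --- and close the argument with a two-sided future/past sandwich using $\S^R$ and the advanced inverse explicitly. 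Your organization makes more transparent where each axiom does its work (in particular that Dynamics alone, via the nilpotency trick with a single generator, already yields the field equation for $B_1$), while the paper's route produces the exponentiated relation \eqref{Weyl} for the unitaries as a by-product of independent interest. One small correction: for the sandwich to work, the cutoff $\chi_+$ must be $\equiv 1$ on a neighborhood of the compact set $J_+(\supp s^1)\cap J_-(\supp s^2)$ (and dually for $\chi_-$), not merely near $\supp s^1\cup\supp s^2$; otherwise $s^1+\dd g_+$ need not avoid the past of $\supp s^2$, and $\chi_+\S^R s^1$ need not be compactly supported. With that adjustment one indeed has $g_+-g_-=-\S s^1$ on $\supp s^2$, and your final linear/antilinear matching coincides with the paper's concluding step.
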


\begin{proof} 
Let $f=\sum_{i\in I}\eta_i f^i$ and $g=\sum_{i\in I} \theta_i g^i$, with $f^i, g^i\in  V_c$, $\eta_i,\theta_i$ odd elements of $G$ and $I\in\Ind$. We decompose $f=f'+\dd \vect{h}$ with $\supp \vect{h}$, $\supp f'$ compact such that
$\supp f'$ does not intersect the past of $\supp g$. We may choose
\be\label{eq:decomposition}
\vect{h}=a {\S}^R f
\ee
where $a$ is a smooth function with $a\equiv1$ on a neighborhood of the past of $\supp g$, and where ${\S}^R$ denotes the retarded
inverse of $\dd$.
From \eqref{eq:delta-L-1} we have 
\be
\fD_G(\dd\vect{h})=(\delta_{\vect{h}}L)-\langle\vect{h},\dd\vect{h}\rangle_G\ ,
\ee
hence, according to the relation Dynamics, we find
\be
\begin{split}
S_G(\fD_G(f))&=S_G(\fD_G(f')+\delta_{\vect{h}}L-\langle\vect{h},\dd\vect{h}\rangle_G)\\&=S_G(\fD_G(f')^{-\vect{h}}-\langle\vect{h},\dd\vect{h}\rangle_G)\\
&=S_G\left(\fD_G(f')-\langle \vect{h},f'\rangle_G
-\langle f',\vect{h}\rangle_G-\langle\vect{h},\dd\vect{h}\rangle_G\right)\ .
\end{split}
\ee
From Causal factorization we thus obtain
\be
S_G(\fD_G(f))S_G(\fD_G(g))=S_G(\fD_G(f'+g)-\langle \vect{h},f'\rangle_G-\langle f', \vect{h}\rangle_G-\langle\vect{h},\dd\vect{h}\rangle_G)\ .
\ee
Using $f'=f-\dd\vect{h}$ we get
\be
\fD_G(f'+g)=\fD_G(f+g)+(\delta_{-\vect{h}}L)-\langle\vect{h},\dd\vect{h}\rangle_G\ .
\ee 
We now use again the relation Dynamics:
\be
S_G\bigl(\fD_G(f)\bigr)S_G\bigl(\fD_G(g)\bigr)=S_G\bigl(\fD_G(f+g)+(\delta_{-\vect{h}}L)+c\bigr)=S_G\bigl(\fD_G(f+g)^{\vect{h}}+c\bigr),
\ee
where $c\doteq -\langle \vect{h},f'\rangle_G-\langle f', \vect{h}\rangle_G-2\langle\vect{h},\dd\vect{h}\rangle_G$.
Taking into account that
\begin{align}
 \fD_G(f+g)^{\vect{h}}-\fD_G(f+g)+c & =\langle (f+g),\vect{h}\rangle_G+\langle\vect{h},(f+g)\rangle_G +c\nonumber\\
 &=\langle f-f',\vect{h}\rangle_G+\langle\vect{h},f-f'\rangle_G-2\langle\vect{h},\dd\vect{h}\rangle_G+\langle g,\vect{h}\rangle_G+\langle\vect{h},g\rangle_G\nonumber\\
 &=\langle\dd\vect{h},\vect{h}\rangle_G+\langle\vect{h},\dd\vect{h}\rangle_G-2\langle\vect{h},\dd\vect{h}\rangle_G+\langle g,\vect{h}\rangle_G+\langle\vect{h},g\rangle_G\\
 &=\langle g,\vect{h}\rangle_G+\langle\vect{h},g\rangle_G\ .\nonumber
\end{align}
we arrive at
\be\label{Weyl}
S_G\bigl(\fD_G(f)\bigr)\,S\bigl(\fD_G(g)\bigr)=S\bigl(\fD_G(f+g)+E(f,g)\bigr)=S_G\bigl(\fD_G(f+g)\bigr)\,S_G\bigl(E(f,g)\bigr)
\ee
with $E(f,g)\in G$ given by,
\be
E(f,g)\doteq \langle g,\vect{h}\rangle_G+\langle\vect{h},g\rangle_G=
\langle g,\S^R f\rangle_G+\langle\S^R f,g\rangle_G
\ee
where we replaced $\vect{h}=\S^R(f-f')$ by $\S^R f$ since $\supp (\S^R f')\cap\supp g=\0$. (The second equality in \eqref{Weyl} 
follows from Causal factorization and $\supp E(f,g)=\emptyset$.)

The relation \eqref{Weyl} implies the canonical anticommutation relations. 
To see this, we first observe that 
\be\label{Weyl1}
S_G(\fD_G(g))S_G(\fD_G(f))=S_G(\fD_G(f))S_G(\fD_G(g))S_G(E(g,f)-E(f,g))
\ee
with
\be
E(g,f)-E(f,g)=\langle f,\S g\rangle_G-\langle g,\S f\rangle_G
\ee
with the $G\otimes\CC$-valued sesquilinear form
\be\label{eq:S-def}
\langle g,\S f\rangle_G=\langle g,\S^R f\rangle_G-\langle\S^R g, f\rangle_G\ .
\ee
Let now $f=\eta_1 s^1$ and $g=\eta_2 s^2$ with $s^1,s^2\in  V_c$ and odd elements $\eta_1,\eta_2\in G$. 
Inserting
\be
S_G(E(g,f)-E(f,g))=1+i\eta_2\eta_1\left(\langle f^1,\S g^2\rangle_G +\langle g^2,\S f^1\rangle_G\right)\ .
\ee
and \eqref{eq:QuantizedDiracField} into \eqref{Weyl1}, we get a non-trivial identity only for the coefficients of $\eta_1\eta_2$:
\begin{multline}
-\bigl(\Psi(s^2)^*+\Psi(s^2)\bigr)\bigl(\Psi(s^1)^*+\Psi(s^1)\bigr)\\=\bigl(\Psi(s^1)^*+\Psi(s^1)\bigr)\bigl(\Psi(s^2)^*+\Psi(s^2)\bigr)-
i\left(\langle s^1,\S s^2\rangle_G+\langle s^2,\S s^1\rangle_G\right).
\end{multline}
This equation must hold individually for the terms being linear/antilinear in $s^1$ and linear/antilinear in $s^2$. Hence,
we obtain the canonical anticommutation relations \eqref{eq:CAR}.

To see that the definition $\S\doteq \S^R-(\S^R)^*$ \eqref{eq:S-def} (where $(\S^R)^*$ denotes the adjoint of $\S^R$ 
with respect to the sesquilinear form $\langle\cdot, \cdot\rangle$, which coincides with the advanced inverse of the 
Dirac operator) agrees with the explicit formula \eqref{eq:S-Delta} for $\S$, note that
\be
\S^R=(i\gamma\partial+m)\Delta^R,\quad (\S^R)^*(x)=(i\gamma\partial_x+m)\Delta^R(-x)
\ee
and $\Delta(x)=\Delta^R(x)-\Delta^R(-x)$.
\end{proof}

\begin{remark} To verify the consistency of our conventions,
we check that $\langle\cdot, i\S\cdot\rangle$ is a positive semidefinite sesquilinear form on $V_c$. 
From \eqref{eq:S-Delta} we obtain
\be
\gamma^0i\S(x-y)=(2\pi)^{-3}\int d^4p\,\, \delta(p^2-m^2)\epsilon(p_0)(p_0+\vec{\alpha}\cdot\vec{p}+m\gamma^0)e^{-ip(x-y)}
\ee
(where $\alpha_k\doteq\gamma_0\gamma_k$ for $k=1,2,3$) and thus
\be
\langle f,i\S f\rangle=2\pi\int d^4p\,\,\delta(p^2-m^2)\epsilon(p_0)
\left(\tilde{f}(p),(p_0+\vec{\alpha}\cdot\vec{p}+m\gamma^0)\tilde{f}(p)\right)
\ee
where $\tilde{f}$ denotes the Fourier transform of $f$.
The positivity follows now from the fact that the matrix valued function
\be
p\mapsto \epsilon(p_0)(p_0+\vec{\alpha}\cdot\vec{p}+m\gamma^0)
\ee
is positive semidefinite on both components of the mass hyperboloid $p^2=m^2$.
\end{remark}

\section{C*-structure}\label{sec:C-star}
The axioms define a graded unital *-algebra $\fA=\fA_0\oplus\fA_1$.
We now want to equip it with a C*-norm. We start with S-matrices $S(F)$ with even fermionic functionals $F$ without auxiliary Grassmann variables. There we can proceed as in the case of a bosonic field. We look at the group generated by these elements modulo the relations Causality and the Quantization condition $S(c)=e^{ic}1$ for constant functionals $c$ and define a state on the group algebra by
\be
\omega(U)=0 \text{ for }U\not\in\{e^{ic}1|c\in\RR\}\ .
\ee  
The operator norm in the induced GNS representation is a C*-norm. We then equip the algebra with the maximal C*-norm \cite{PalmerI, PalmerII}. 
Note that in contrast to the bosonic case the Dynamical relation does not lead to relations within this algebra. 

We now want to extend this C*-norm. We cannot expect that it can be extended to the full algebra, since the presence of the Grassmann variables induces an expansion of the S-matrices into polynomials of Grassmann variables whose coefficients cannot be expected to be bounded, in general. An example is
\be\label{eq:current}
S(\eta j_\mu(f^\mu))=1+i\eta J_\mu(f^\mu),
\ee 
where $\eta$ is an even element of $G$ with $\eta^2=0$, with the classical current
\be
j_\mu(f^\mu)=\int\overline{\psi}\wedge\gamma_\mu\psi f^\mu
\ee
of the Dirac field and its quantized version $J_\mu$ (defined by \eqref{eq:current}).

Instead we use the anticommutation relations \eqref{eq:CAR} which imply that for $||f||_{V_c}=1$,
with the seminorm
\be
||f||_{V_c}^2=\langle f,i\S f\rangle\ ,
\ee
$\Psi(f)^*\Psi(f)$ is a selfadjoint projection. Hence for every non-zero C*-seminorm
\be
||\Psi(f)||=||f||_{V_c} 
\ee
holds. Moreover, we have 
\begin{proposition}
$\Psi(f)=0$ if $||f||_{V_c} =0$.
\end{proposition}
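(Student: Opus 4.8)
The plan is to read this off from the canonical anticommutation relations \eqref{eq:CAR} of Theorem~\ref{theo:CAR} together with the positivity of $\langle\cdot,i\S\cdot\rangle$ established in the remark following that theorem. First I would specialise \eqref{eq:CAR} to $s^1=s^2=f$, which gives $\Psi(f)^2=0$, $(\Psi(f)^*)^2=0$ and
\be
\Psi(f)\Psi(f)^*+\Psi(f)^*\Psi(f)=\langle f,i\S f\rangle\,1_{\fA}=\|f\|_{V_c}^2\,1_{\fA}\ ,
\ee
so the hypothesis $\|f\|_{V_c}=0$ forces $\Psi(f)\Psi(f)^*=-\Psi(f)^*\Psi(f)$. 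Substituting this once more into the selfadjoint element $N\doteq\Psi(f)^*\Psi(f)$ and using $(\Psi(f)^*)^2=0$, one gets
\be
N^2=\Psi(f)^*\bigl(\langle f,i\S f\rangle\,1_{\fA}-\Psi(f)^*\Psi(f)\bigr)\Psi(f)=\|f\|_{V_c}^2\,N=0\ .
\ee

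The second step is to promote $N=N^*$ with $N^2=0$ to $\Psi(f)=0$. Here I would argue that a selfadjoint square-zero element has vanishing C*-seminorm, since $\|N\|^2=\|N^2\|=0$ for any C*-seminorm, and that $\|\Psi(f)\|^2=\|\Psi(f)^*\Psi(f)\|=\|N\|$; hence $\Psi(f)$ lies in the kernel of \emph{every} C*-seminorm on $\fA$, in particular is sent to $0$ in the GNS representation of any state on $\fA$ — e.g.\ the state used above to define the C*-norm on the even S-matrices — and is therefore $0$ in the C*-algebra of fields built in this section. Equivalently, and perhaps more transparently, one may first use Cauchy--Schwarz for the positive semidefinite form $\langle\cdot,i\S\cdot\rangle$ to conclude $\langle g,i\S f\rangle=0$ for \emph{all} $g\in V_c$; then \eqref{eq:CAR} makes $\Psi(f)$ anticommute with every $\Psi(g)$, $\Psi(g)^*$ and square to zero, and evaluating an arbitrary state $\omega$ on $\Psi(f)^*\Psi(f)+\Psi(f)\Psi(f)^*=0$ followed by the Cauchy--Schwarz inequality for $\omega$ gives $\omega(b^*\Psi(f))=0$ for all $b\in\fA$, whence $\Psi(f)=0$.

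The step I expect to be the real obstacle is precisely the passage from $\Psi(f)\Psi(f)^*+\Psi(f)^*\Psi(f)=0$ to $\Psi(f)=0$: this is false in a bare $*$-algebra (a selfadjoint nilpotent need not vanish), so one genuinely has to invoke a C*-seminorm, i.e.\ the positivity that ultimately rests on the positive semidefiniteness of $\langle\cdot,i\S\cdot\rangle$ proved in the remark following Theorem~\ref{theo:CAR}; this is also what guarantees that the pertinent C*-seminorms on $\fA$ are nonzero, so that the argument is not vacuous. The remaining manipulations are short and purely algebraic consequences of \eqref{eq:CAR}.
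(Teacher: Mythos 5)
You correctly identify the crux --- passing from $\{\Psi(f),\Psi(f)^*\}=0$ to $\Psi(f)=0$ --- but your resolution does not close the gap. Your argument shows only that $\Psi(f)$ is annihilated by every C*-\emph{seminorm} on $\fA$, i.e.\ that it vanishes in every GNS representation. The proposition, however, asserts $\Psi(f)=0$ as an element of the $*$-algebra $\fA$ itself, and this stronger statement is what is needed at this point of the paper: it is used immediately afterwards to conclude that the $*$-algebra generated by the $\Psi(f)$ \emph{is} the CAR algebra, and that identification (simplicity, uniqueness of its C*-norm, faithfulness of $\sigma$) is an input to the construction of the C*-norm on $\fB$ in the following theorem. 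If $\Psi(f)$ were a nonzero selfadjoint nilpotent in $\fA$, the subalgebra generated by the Dirac fields would admit no C*-norm at all, and the definition of $\|\cdot\|_1$ and the induced-representation argument could not be set up. So appealing to ``the C*-algebra of fields built in this section'' is circular: that C*-structure is built \emph{after}, and \emph{using}, this proposition. Your alternative via states has the same defect: $\omega(b^*\Psi(f))=0$ for all states $\omega$ and all $b$ again only places $\Psi(f)$ in the kernel of the maximal C*-seminorm, not in $\{0\}\subset\fA$.

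The paper's proof is purely algebraic and bypasses the CAR entirely. From $\|f\|_{V_c}=0$ and Cauchy--Schwarz for the positive semidefinite form $\langle\cdot,i\S\cdot\rangle$ one gets $\langle g,i\S f\rangle=0$ for all $g\in V_c$, hence $\S f=0$; by the exactness properties of Green-hyperbolic operators this forces $f=\dd h$ for some $h\in V_c$. Then $\fD_G(\eta f)=\delta_{\eta h}L$ for an odd $\eta$ (the quadratic term $\langle\eta h,\dd(\eta h)\rangle_G$ drops out since $\eta^2=0$), so the axiom Dynamics gives $S_G(\fD_G(\eta f))=1$, i.e.\ $B_1(f)=\Psi(f)+\Psi(f)^*=0$ as an identity \emph{in} $\fA$. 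Repeating the argument with $if$ (using $\|if\|_{V_c}=\|f\|_{V_c}$ and the antilinearity of $\Psi$) yields $\Psi(f)^*-\Psi(f)=0$, whence $\Psi(f)=0$. Your opening reduction via the CAR to $N^2=0$ is correct but does not feed into a complete argument; the essential missing ideas are the implication $\S f=0\Rightarrow f\in\dd(V_c)$ and the use of the Dynamics relation to kill $\Psi(f)+\Psi(f)^*$ algebraically.
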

\begin{proof}
    Let $f\in V_c$ with $||f||_{V_c} =0$. Then, due to the positive semidefiniteness of $\langle\cdot, i\S\cdot\rangle$,
    we may use the Cauchy-Schwarz inequality to obtain $\langle g, i\S f\rangle=0$ for every $g\in V_c$. Thus $\S f=0$. But then, due to the general properties of normal hyperbolic operators, $f$ must be of the form $\dd h$ for some $h\in V_c$. So for an odd element $\eta\in G$ we get $\fD_G(\eta f)=\delta_{\eta h}L$, 
    hence by the axiom Dynamics
    $S_G(\fD_G(\eta f))=1$, thus $\Psi(f)+\Psi(f)^*=0$. Since $||if||_{V_c}=||f||_{V_c}$, we can repeat the argument with $if$ instead of $f$ and arrive at $\Psi(f)=0$.
\end{proof}
We conclude that the *-algebra generated by $\Psi(f)$, $f\in V_c$ is the algebra of canonical anticommutation relations. 

Let us consider the sub-*-algebra $\fB$ of $\fA$, generated by the S-matrices $S(F)$ with even $F$ as above and the Dirac fields $\Psi(f)$, then we have
\begin{theorem}
The maximal C*-seminorm on $\fB$ exists and is a C*-norm.
\end{theorem}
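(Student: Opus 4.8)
I would treat the two assertions separately, the existence of a maximal C*-seminorm being the routine half and faithfulness the substantive one. For existence I would argue by boundedness of generators: every element of $\fB$ lies in the linear span of finite products of the $S(F)$ with $F$ even and free of auxiliary Grassmann variables, their inverses $S(F)^{-1}=S(F)^{*}$, and the smeared fields $\Psi(f)$, $\Psi(f)^{*}$. For any C*-seminorm $p$ on $\fB$ one has $p(S(F))\le 1$ since $S(F)$ is unitary, and $p(\Psi(f))\le\|f\|_{V_{c}}$: indeed, when $\|f\|_{V_{c}}=1$ the element $\Psi(f)^{*}\Psi(f)$ is a selfadjoint projection, so $p(\Psi(f))^{2}=p(\Psi(f)^{*}\Psi(f))\in\{0,1\}$, and the general case follows by homogeneity together with the vanishing of $\Psi(f)$ for $\|f\|_{V_{c}}=0$ established above. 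Hence, for fixed $b\in\fB$, the number $p(b)$ is bounded uniformly in $p$ by the corresponding word in $1$ and the quantities $\|f\|_{V_{c}}$, so that $\|b\|_{\max}\doteq\sup_{p}p(b)$ is finite for every $b$; being a supremum of C*-seminorms, it is itself a C*-seminorm, manifestly the maximal one.

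For faithfulness I would exhibit a faithful $*$-representation of $\fB$, assembled from two faithful representations already at hand. On the C*-algebra $\Cc$ generated by the even S-matrices $S(F)$ without Grassmann variables, the maximal C*-seminorm is already known to be a norm, since it dominates the GNS norm of the state $\omega$ with $\omega(U)=0$ off the phases $e^{ic}1$. On the algebra $\Cc_{\mathrm{CAR}}$ generated by the $\Psi(f)$, which by Theorem \ref{theo:CAR} together with the preceding Proposition is the CAR algebra over the pre-Hilbert space obtained from $(V_{c},\langle\cdot,i\S\cdot\rangle)$, the Fock representation is faithful because the CAR algebra is simple. Since $\fB$ is generated as a $*$-algebra by $\Cc$ and $\Cc_{\mathrm{CAR}}$, it then remains to fit these two representations together into a single representation of $\fB$; the operator norm in the combined representation is a C*-norm dominated by $\|\cdot\|_{\max}$, which forces the latter to be a norm.

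The hard part is precisely this fitting-together: controlling the cross-relations between the even S-matrices and the Dirac fields that are imposed in $\fB\subset\fA$ by Causal factorization and Dynamics. The cleanest route I see is to prove that $\fB$ is the graded tensor product $\Cc\,\hat{\otimes}\,\Cc_{\mathrm{CAR}}$; as the $S(F)$ have degree $0$, this reduces to checking that $S(F)$ commutes with every $\Psi(f)$ modulo the relations already present, which should follow by moving $\supp f$ out of the causal shadow of $\supp F$ and back through Causal factorization, using the Weyl-type relation \eqref{Weyl} derived in the proof of Theorem \ref{theo:CAR}. Granting $\fB\cong\Cc\,\hat{\otimes}\,\Cc_{\mathrm{CAR}}$, the state $\omega\,\hat{\otimes}\,\omega_{\mathrm{F}}$ with $\omega_{\mathrm{F}}$ the Fock vacuum has GNS representation $\pi_{\omega}\,\hat{\otimes}\,\pi_{\mathrm{F}}$, which is faithful because $\Cc_{\mathrm{CAR}}$ is nuclear, so that the graded minimal and maximal tensor norms coincide, and both factor representations are faithful. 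I expect the verification of the structural identity $\fB\cong\Cc\,\hat{\otimes}\,\Cc_{\mathrm{CAR}}$, or, should no clean splitting be available, the direct construction of a single faithful representation of $\fB$ restricting to those of $\Cc$ and $\Cc_{\mathrm{CAR}}$, to be the genuine obstacle; the remainder is routine bookkeeping.
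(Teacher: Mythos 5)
Your first half (existence of the maximal C*-seminorm) is essentially the paper's argument: every C*-seminorm is dominated generator-by-generator ($p(S(F))\le 1$ for unitaries, $p(\Psi(f))\le\|f\|_{V_c}$ by the projection argument and uniqueness of the CAR norm), so the supremum over C*-seminorms is finite on each element and is itself a C*-seminorm. The paper packages this as a dominating Banach norm $\|\cdot\|_1$, but the content is the same.

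The faithfulness half has a genuine gap, and it sits exactly where you located it. The structural claim $\fB\cong\Cc\,\hat{\otimes}\,\Cc_{\mathrm{CAR}}$ is not just unproven but false: $S(F)$ for a general even local functional $F$ (say, built from $\overline{\psi}\gamma_\mu\psi$) does \emph{not} commute with $\Psi(f)$ when the supports are causally related -- that noncommutativity is precisely what makes the theory interacting, and in perturbation theory $S(F)^{-1}\Psi(f)S(F)$ is the (nontrivial) interacting field. Causal factorization only yields commutation at spacelike separation, and the relation Dynamics lets you move the support of the \emph{free} doubled Dirac field along free solutions; conjugating by $S(F)$ with interacting $F$ and moving $\supp f$ "out and back" does not return you to $\Psi(f)$. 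The Weyl-type relation \eqref{Weyl} concerns products $S(\fD_G(f))S(\fD_G(g))$ among the linear-field S-matrices only and says nothing about cross-commutation with general even $S(F)$. Since the splitting fails, the tensor-product state $\omega\,\hat{\otimes}\,\omega_{\mathrm{F}}$ is not available, and your fallback ("direct construction of a single faithful representation") is exactly the missing idea rather than routine bookkeeping. The paper supplies it by induction of representations: take the group $\mathcal{U}$ of unitaries spanning a dense subalgebra of the CAR algebra (unit and Pauli matrices under Jordan--Wigner), a faithful representation $\sigma$ of the CAR algebra, and induce $\sigma$ from $\mathcal{U}$ up to the group $\mathcal{V}$ generated by $\mathcal{U}$ and the S-matrices, acting on $\bigoplus_{j\in\mathcal{V}/\mathcal{U}}\mathcal{H}_\sigma^j$. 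Faithfulness on the group algebra then follows from linear independence of the cosets, with no commutation relations between the two families of generators needed at all. That construction (plus the observation that each nonzero $B\in\fB$ lies in the group algebra for a suitable choice of $\mathcal{U}$) is what your proposal is missing.
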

\begin{proof}
Let us equip $\fB$ with the norm
\be\label{eq:banachnorm}
||A||_1=\inf\{\sum_i\prod_j||C_i^j||\ \big\vert\ A=\sum_i\prod_j U_i^jC_i^j\}
\ee
with products $U_i^j$ of S-matrices $S(F)$ and their inverses and $C_i^j$ elements of the *-algebra generated by the Dirac field, equipped with its unique C*-norm $||\cdot||$. For every element $A$ as in \eqref{eq:banachnorm} and any C*-seminorm $p$ we get $p(A)\leq \sum_i\prod_j p(C^j_i)$, since unitary elements are bounded by $1$ in every C*-seminorm, then by uniqueness of the C*-norm $||\cdot||$  one gets $p(C_i^j)=||C^j_i||$ hence the Banach norm $||\cdot||_1$ dominates every C*-seminorm, and we can equip $\fB$ with its maximal C*-seminorm.
%
It remains to show that this is actually a norm. 

For this purpose we choose a family of unitaries in the algebra generated by the Dirac field which is a basis of a dense subset and which is closed under multiplication and adjunction, up to a factor. To obtain this basis, we use the fact that the algebra of canonical anticommutation relations (the CAR algebra) is, by the Jordan-Wigner transformation, isomorphic to a tensor product of $2\times 2$-matrix algebras (see for instance  \cite{BR2}, and the detailed treatment in \cite{CrDuFid}) where the unit together with the Pauli matrices form such a basis. We consider the generated group $\mathcal{U}$, together with a nontrivial (hence faithful) representation $\sigma$ of the CAR algebra. 

We then construct the induced representation of the full group $\mathcal{V}$ generated by $\mathcal{U}$ and the S-matrices $S(F)$ as above,
by proceeding as follows:
we choose from every coset $j\in \mathcal{V}/\mathcal{U}$ a representative $V_j$. Then the induced representation $\pi$ is defined on the Hilbert space
\be
\mathcal{H}_\pi=\bigoplus_{j\in\mathcal{V}/\mathcal{U}}\mathcal{H}_{\sigma}^j
\ee
where each summand is a copy of the representation space of $\sigma$, by
\be
(\pi(V)v)_i=\sum_{j\in\mathcal{V}/\mathcal{U},V_i^{-1}VV_j\in\mathcal{U}}\sigma(V_i^{-1}VV_j)v_j\quad\text{for}\quad
v=\bigoplus_{j\in\mathcal{V}/\mathcal{U}}v_j\in\mathcal{H}_\pi\ .
\ee
(Note that the sum contains only one term.)

$\pi$ can now be extended to the group algebra over $\mathcal{V}$  which is a $||\cdot||_1$-dense subalgebra $\fB_0$ of $\fB$. This representation is faithful. To see this we apply a generic element $\sum_{V\in\mathcal{V}} \lambda_V V$, with $\lambda_V\not=0$ for a finite linearly independent subset of $\mathcal{V}$, to the subspace corresponding to the coset of unity, denoted by $\hat{0}$. Assume $\sum_{V\in\mathcal{V}}\lambda_V \pi(V)=0$. Let $V_{\hat{0}}=1$ and $v\in \mathcal{H}_{\sigma}^{\hat{0}}$. Then 
\be
\sum_{V\in\mathcal{V}}\lambda_V \pi(V)v=\bigoplus_{j\in\mathcal{V}/\mathcal{U}}\sum_{V\in j}\lambda_V\sigma(V_j^{-1}V)v\ .
\ee
Since $\sigma$ has a faithful extension to the CAR algebra, we
conclude that for all $j\in\mathcal{V}/\mathcal{U}$
\be
\sum_{V\in j}\lambda_V V_j^{-1}V=0 \ .
\ee
But the set $\{V_j^{-1}V|V\in j,\lambda_V\not=0\}$ is linearly independent, hence $\lambda_V=0$ for all  $V\in j$. 
Thus the operator norm in this representation is a C*-norm on $\fB_0$. Moreover, since $\pi$ is continuous, it has a unique extension to a C*-seminorm on $\fB$. 

Finally, given any element of $B\in\fB$, $B\not= 0$,  there is some choice of $\mathcal{U}$
such that $B\in\fB_0$, hence there is a C*-seminorm nonvanishing on $B$. Thus the maximal C*-seminorm on $\fB$ is indeed a norm.
\end{proof}
\begin{remark}
The proof uses only dense *-algebras. By completing $\fB$, it is clear that the CAR C*-algebra is properly contained in it.

Moreover, by restriction to open bounded subregions of Minkowski spacetime we can define a net of C*-algebras from $\fB$.
This construction uses the Lagrangian of the free theory. But as shown in \cite{BF19} (see also \cite{BDFR21} for an explicit formula) the net of interacting observables can be constructed within the net of the free theory and vice versa. In particular operators satisfying the CAR can also be found in the interacting theory.
\end{remark}

\section{Equivalence of the relation Dynamics to the Field equation in perturbation theory}\label{sec:Pert}

For the perturbative description of Dirac spinor fields (see e.g.~\cite[Chap.~5.1.1]{Due19}), 
we aim to prove the equivalence of the relation `Dynamics' 
\eqref{eq:dynamics} to the axiom `Field equation' for time-ordered products.  
The main idea of proof is taken from \cite[Appendix]{BF19}. We throughout work with \emph{extended} fermionic functionals $F_G$
with compact support,
\ie, $F_G$ is defined on $G\ox\Lambda V$, where $V=C^\infty(\MM,\CC^4)$; this is not necessary, however, proceeding this way 
we may directly borrow some formulas from Section \ref{sec:CAR}. By $\Fc$ we mean the space of all functionals of this kind
and by $\Floc$ the subspace of the local ones.

\paragraph{Star product and unrenormalized time-ordered product.} \label{sec:perturbative}
To define the \emph{star product} let
\be
\S^\pm(x)\doteq \pm(i\pp_x+m)\Dl^+(\pm x),
\ee
where $\Dl^+$ is the scalar Wightman $2$-point function (or a Hadamard function). Note that $\S^+$ and $\S^-$ are related to the "anticommutator function" $\S$ appearing in Sect.~\ref{sec:CAR} by
$\S(x)=-i\bigl(\S^+(x)+\S^-(x)\bigr)$. The star product is defined by
\be\label{eq:star-tilde}
 (\eta_1\ox F_{1,G})\star (\eta_2\ox F_{2,G})\doteq (-1)^{\dg(\eta_2)\dg(F_1)}\,(\eta_1\eta_2)\ox (F_{1,G}\star F_{2,G})
 \ee
for $F_{1,G},F_{2,G}\in\Fc$, and  $\eta_1,\eta_2\in G$, and%
\begin{align}\label{eq:de F-star}
F_{1,G}\star&  F_{2,G}\\
\doteq& \sum_{n,k=0}^\infty \frac{\hslash^{n+k}}{n!k!} 
\int dx_1 \cdots dx_{n+k}\,dy_1 \cdots dy_{n+k}\, 
\frac{\dl_r^{n+k}  F_{1,G}}{\dl\psi_{t_1}(x_1)_G\cdots (n)\,
\dl\ovl\psi_{u_1}(x_{n+1})_G\cdots (k)}
\nonumber\\
 &\wedge\, \prod_{j=1}^n \S^+_{t_js_j}(x_j-y_j)\, \prod_{l=1}^k
\S^-_{v_lu_l}(y_{n+l}-x_{n+l})\,\frac{\dl^{n+k}  F_{2,G}}{\dl\ovl\psi_{s_1}(y_1)_G\cdots (n)\,
\dl\psi_{v_1}(y_{n+1})_G\cdots (k)} \ ,\nonumber
\end{align}
where $\dl^n/\dl\psi_{t_1}(x_1)_G\cdots (n)\doteq \dl^n/\dl\psi_{t_1}(x_1)_G\cdots \dl\psi_{t_n}(x_n)_G$ 
and $\frac{\dl_r}{\dl\psi_G}$ denotes the functional derivative from the right-hand side.%
\footnote{By $\frac{\dl F_G}{\dl\psi(x)_G}(h)$ or $\frac{\dl F_G}{\dl\ovl\psi(x)_G}(h)$ we mean the integral kernel of the left 
(functional) derivative $F_G^{(1)}(h)$ of $F_G$ at $h$ as introduced in definition \ref{GradedDerivaitve}; the functional 
derivative from the right-hand side is defined by
$$
\frac{\dl_r}{\dl\psi(y)_G}\,\ovl\psi(x_1)_G\wedge\cdots\wedge\psi(x_n)_G\doteq
(-1)^{n-1}\,\frac{\dl}{\dl\psi(y)_G}\,\ovl\psi(x_1)_G\wedge\cdots\wedge\psi(x_n)_G\,
$$
and similarly for $\dl_r/\dl\ovl\psi(y)_G$.}

 In addition we introduce the \emph{unrenormalized time-ordered product} $\star_F$, by the same  formulas
 \eqref{eq:star-tilde} and \eqref{eq:de F-star}, but with both $\S^+(z)$ and $(-\S^-(z))$ replaced by 
 \be
 \S^F(z)\doteq (i\pp_x+m)\Dl^F(z)=\theta(z^0)\,\S^+(z)-\theta(-z^0)\,\S^-(z)
 \ee 
 everywhere, where $\Dl^F$ is the scalar Feynman propagator. 
 This product exists if the pertinent contractions do not form any loop diagram -- we shall use it only in such instances. 
 For example, for $j^\mu(x)_G\doteq \ovl\psi(x)_G\wedge\gamma^\mu\psi(x)_G$ (\ie, the electromagnetic current)  the last term in
\begin{align}
j^\mu(x)_G & \star_F j^\nu(y)_G=j^\mu(x)_G\wedge j^\nu(y)_G
+\hslash\,\ovl\psi(x)_G\wedge \ga^\mu\,\S^F(x-y)\,\ga^\nu\,\psi(y)_G\nonumber\\
&+\hslash\,\ovl\psi(y)_G\wedge \ga^\nu\,\S^F(y-x)\,\ga^\mu\,\psi(x)_G
-\hslash^2\,\mathrm{tr}\bigl(\ga^\mu\,\S^F(x-y)\,\ga^\nu\,\S^F(y-x)\bigr)\ ,
\end{align}
(where matrix notation for the spinors is used and $\mathrm{tr}(\cdot)$ 
denotes the trace in $\CC^{4\x 4}$) does generally not exist, but it is well-defined when smeared out with a test function $f(x,y)$ 
which has support outside of the diagonal $x=y$.

Both $\star$ and $\star_F$ are associative and the latter is commutative if both factors are even elements of  $G\ox\Fc$.
For $F_G=\sum_j\eta_j\ox F_{j,G}\in (G\ox\Fc)_\mathrm{even}$, exponentials $\exp_\wedge(F_G)$ and 
$\exp_{\star_ F}(F_G)$ are 
defined by the pertinent power series, where the powers are meant with respect to the indicated product.

We work with the sesquilinear form $\langle\cdot\,,\,\cdot\rangle_G$ \eqref{eq:sesqui-G} and the 
Lagrangian $L(f)_G$ \eqref{eq:Lagr-G}. We use that the variation of $L(f)_G$ under a shift, $\dl_{\vect{h}}L_G$ \eqref{eq:dL-G}
(where $\vect{h}=\sum_{i\in I} \vect{h}^i\eta_i$, with odd elements $\eta_i\in G$, $\vect{h}^i\in  V_c$),
may be written in terms 
of the smeared classical double Dirac field $\fD_G(s)$ \eqref{eq:Diracfield} as
\be\label{eq:dL0}
\dl _{\vect{h}} L_G 
=\fD_G(\dd\vect{h})+\langle\vect{h},\dd\vect{h}\rangle_G\in (G\ox\Floc)_\mathrm{even},
\ee
by using \eqref{eq:delta-L-1}.

For $F_G\in (G\ox\Fc)_\mathrm{even}$ we introduce the Euler derivative
\be\label{eq:epsF}
(\eps  F_G)(\vect{h})\doteq \frac{d}{du}\Big\vert_{u=0} F_G^{u\vect{h}}=
\int dx\,\,\Bigl(\ovl{\vect{h}(x)}\,\frac{\dl F_G}{\dl\ovl\psi(x)_G}+\frac{\dl_r F_G}{\dl\psi(x)_G}\,\vect{h}(x)\Bigr).
\ee

By using $\dd \S^+=0=\dd \S^-$ we obtain
\be\label{eq:star-K} 
 F_G\star \fD_G(\dd\vect{h})= F_G\wedge \fD_G(\dd\vect{h})=\fD_G(\dd\vect{h})\wedge F_G=\fD_G(\dd\vect{h})\star  F_G
\ee
for all $F_G\in (G\ox\Fc)_\mathrm{even}$ and all $\vect{h}$ of the above given form. For the product $\star_F$, the relation $\dd \S^F=i\dl$ yields
\be
\exp_{\star_F}(i\,\fD_G(\dd\vect{h})) = \exp_\wedge(i\, \fD_G(\dd\vect{h}))\cdot \exp(-i\,\langle \vect{h},\dd\vect{h}\rangle_G).
\label{eq: F-1}
\ee

\paragraph{The renormalized time-ordered product.} 
The renormalized (off-shell) time-or\-de\-red product is a collection of linear maps 
$T_n\,:\,\Floc^{\ox n}\to \Fc$, $n\in\NN$, which is defined by certain basic axioms and renormalization conditions, 
see e.g.~\cite[Chap.~5.1.1]{Due19}; in particular $T_{n,G}\,:\,((G\ox \Floc)_\mathrm{even})^{\ox n}\to (G\ox\Fc)_\mathrm{even}$,
defined by
\be
T_{n,G}(\eta_1\ox F_{1,G},\ldots, \eta_n\ox F_{n,G})\doteq (\eta_n\cdots\eta_2\eta_1)\ox T_n( F_{1,G},F_{2,G},\ldots, F_{n,G}),
\ee 
is required to be invariant under permutations of $(\eta_1\ox F_{1,G}),\ldots, (\eta_n\ox F_{n,G})$.
Due to the basic axiom `Causality', $T_{n,G}$ agrees with the 
$n$-fold product $\star_F$ if $\supp  F_{j,G}\cap\supp  F_{k,G}=\emptyset$  for all $j<k$. 
The generating functional of the sequence of time-ordered products $(T_{n,G})_{n\in\NN}$ is the
$S$-matrix
\be\label{eq:S-matrix}
\begin{split}\raisetag{24pt}
\qquad\qquad\qquad S_G\,:\,&(G\ox\Floc)_\mathrm{even}\to (G\ox\Fc)_\mathrm{even}\\
\shortintertext{defined by}
S_G(\la F_G)&\doteq T_G\bigl(\exp_\ox(i\la F_G)\bigr)\equiv 1+\sum_{n=1}^\infty\frac{i^n\la^n}{n!}T_{n,G}( F_G^{\ox n})\ ,
\end{split}
\ee
which we understand as a formal power series in $\la\in\RR$. In particular,
since $\dl_{\vect{h}} L_G$ does not contain any terms of second or higher order in $\psi_G,\ovl\psi_G$,
there do not contribute any loop diagrams to $S_G(\dl_{\vect{h}} L_G)$, hence we obtain
 \be\label{eq:S(dL0)}
S_G(\dl_{\vect{h}} L_G)=\exp_{\star_ F}(i\dl_{\vect{h}} L_G)=\exp_\wedge(i\fD_G(\dd\vect{h}))\ ,
\ee
where the second equality is due to \eqref{eq:dL0} and \eqref{eq: F-1}.

The renormalization condition ``(off-shell) Field equation'' can be written in terms of the retarded interacting field,
 \be
 \begin{split}
 R_G\bigl(\exp_\ox(F_G),H_G\bigr)&\doteq \frac{d}{id\la}\Big\vert_{\la=0} S_G(F_G)^{\star -1}\star S_G( F_G+\la H_G)\\
& =S_G(F_G)^{\star -1}\star T_G\bigl(\exp_\ox(i F_G)\ox H_G\bigr),
\end{split}
\ee
 (where $F_G,H_G\in (G\ox\Floc)_\mathrm{even}$ and $F_G$ is interpreted as the interaction), as
 \be\label{eq:FE-2}
 R_G\bigl(\exp_\ox(F_G),[(\eps F_G)(\vect{h})+\fD_G(\dd\vect{h})]\bigr)=\fD_G(\dd\vect{h})\ ,
 \ee
 see e.g.~\cite[formula (5.1.51)]{Due19}. By using Field Independence of the time-ordered product (which is a further 
 renormalization condition), that is, 
 \be
 T_G\bigl(\exp_\ox(iF_G)\ox(\eps F_G)(\vect{h})\bigr)=-i(\eps S_G(F_G))(\vect{h})\ ,
 \ee
 the identity \eqref{eq:FE-2} is equivalent to
 \be\label{eq:SD}
 T_G\bigl(\exp_\ox(i F_G)\ox \fD_G(\dd\vect{h})\bigr)=S_G(F_G)\star\fD_G(\dd\vect{h})+i(\eps S_G(F_G))(\vect{h}),
 \ee
 which is the Schwinger-Dyson equation as given in \cite[formula (A.2)]{BF19} written for the Dirac field.

\paragraph{Equivalence of the relations Dynamics and Field equation.} 
To derive the relation Dynamics from the field equation  \eqref{eq:FE-2}, note the relations
\be\label{eq:aux1}
\frac{d}{d\la}\,F_G^{\la\vect{h}}=(\eps F_G^{\la\vect{h}})(\vect{h}),\quad
\frac{d}{d\la}\,\dl _{\la\vect{h}} L_G=\fD_G(\dd\vect{h})+2\la\langle\vect{h},\dd\vect{h}\rangle_G,
\ee
and
\be\label{eq:aux2}
(\eps \dl _{\la\vect{h}} L_G)(\vect{h})=\la\frac{d}{du}\Big\vert_{u=0}\fD_G(\dd\vect{h})^{u\vect{h}}
=2\la\langle\vect{h},\dd\vect{h}\rangle_G,
\ee
which follow from \eqref{eq:epsF}, \eqref{eq:dL0} and \eqref{eq:Diracfield}. Hence, setting
\be
K_G(\la)\doteq F_G^{\la\vect{h}}+\dl _{\la\vect{h}} L_G\in (G\ox\Floc)_\mathrm{even}
\ee
we obtain
\be\label{eq:aux3}
\frac{d}{d\la}\,K_G(\la)=\bigl(\eps K_G(\la)\bigr)(\vect{h})+\fD_G(\dd\vect{h}).
\ee
In addition we introduce
\be\label{eq:H}
U_G(\la)\doteq S_G(F_G)^{\star -1}\star S_G\bigl(K_G(\la)\bigr).
\ee
To obtain a simpler formula for $U_G(\la)$, we compute $\frac{d}{id\la}U_G(\la)$ by using \eqref{eq:aux3}:
\begin{equation}
\begin{split}
\frac{d}{id\la}U_G(\la) & = S_G(F_G)^{\star -1}\star
T_G\Bigl(\exp_{\ox}(iK_G(\la))\ox [\bigl(\eps K_G(\la)\bigr)(\vect{h})+\fD_G(\dd\vect{h})]\Bigr)\\
&= U_G(\la)\star R_G\Bigl(\exp_\ox(iK_G(\la)), [\bigl(\eps K_G(\la)\bigr)(\vect{h})+\fD_G(\dd\vect{h})]\Bigr),\label{eq:dH(la)}
\end{split}
\end{equation}
after insertion of the identity $S_G\bigl(K_G(\la)\bigr)\star S_G\bigl(K_G(\la)\bigr)^{\star -1}=1$ in the middle of the first line.
Now we insert the field equation \eqref{eq:FE-2} for the interaction $K_G(\la)$ and, in a second step, we take into account the relation \eqref{eq:star-K}:
\be
\frac{d}{id\la}U_G(\la)=U_G(\la)\star \fD_G(\dd\vect{h})=U_G(\la)\wedge \fD_G(\dd\vect{h}).
\ee
Since $U_G(0)=1$, we conclude that
\be\label{eq:H(la)}
U_G(\la)=\exp_\wedge(i\la\fD_G(\dd\vect{h}))=S_G(\dl_{\la\vect{h}} L_G),
\ee
where \eqref{eq:S(dL0)}  is inserted in the second equality. This identity can equivalently be written as
\be
S_G(F_G^{\vect{h}}+\dl_{\vect{h}} L_G)= S_G(F_G)\star S_G(\dl_{\vect{h}} L_G)=S_G(\dl_{\vect{h}} L_G)\star S_G(F_G)\ ,\label{eq:SD2}
\ee
the second equality follows from \eqref{eq:star-K}.
This is the ``off-shell'' version of the relation Dynamics \eqref{eq:dynamics} in terms of the perturbative $S$-matrix \eqref{eq:S-matrix}. 
More precisely, reducing the space of  field configurations to the solutions of the Dirac equation,
\be
  V_0\doteq \{h\in  V\,\vert\,\dd h=0\},
\ee
we have $\fD_G(\dd \vect{h})\vert_{V_0}=0$ and, hence, $ S_G(\dl_{\vect{h}} L_G)\vert_{V_0}=1$; that is, restricting the funtionals 
in this way, the relation \eqref{eq:SD} takes the  on-shell form of the relation Dynamics \eqref{eq:dynamics}.
 
That the field equation \eqref{eq:FE-2} follows from the relation Dynamics can easily be seen: applying 
$\frac{d}{id\la}\big\vert_{\la=0}$ to the relation dynamics in the form \eqref{eq:H(la)} and taking into account the formula \eqref{eq:dH(la)}, we obtain the field equation.
 
 \paragraph{Validity of the  further defining relations for the algebra $\mathfrak{A}_G$.} 
 The axiom Causality for the time-ordered product $T_{n,G}$
implies that $S_G(F_G)=T_G(\exp_\ox(iF_G))$ satisfies the Causal factorization \eqref{eq:caus-fact}. 
The validity of the further defining
relations for the algebra $\mathfrak{A}_G$ is obvious, in particular $S_G(F_G)^*=S_G\bigl((F_G)^*\bigr)^{\star -1}$ is a further 
renormalization condition for $T_{n,G}$, which can easily be satisfied. Summing up, the algebra
\be\label{eq:A-S}
\Ac\doteq  {\bigvee}_\star  \{S_G(F_G)\,\big\vert\, F_G\in (G\ox\Floc)_\mathrm{even}\}
\ee
(where $\bigvee_\star$ means the algebra, under the product $\star$, 
generated by members of the indicated set),  fulfills all defining relations for $\mathfrak{A}_G$. 

This can also be shown for the algebra obtained by the algebraic adiabatic limit \cite{BF0} of the relative $S$-matrices
\be
(S_G)_{F_G}(H_G)\doteq S_G(F_G)^{\star -1}\star S_G(F_G+H_G)\ ,\quad F_G,H_G\in (G\ox\Floc)_\mathrm{even}.
\ee
Again, the only non-trivial step 
is the verification of the relation Dynamics -- this can be done in precisely the same way as in \cite[Appendix]{BF19}.

\section{Conclusions and Outlook}
In this paper we have proposed a new description of theories with fermionic degrees of freedom, which is compatible with the $C^*$-algebraic framework introduced by \cite{BF19}. A key feature is the fact that only finite dimensional Grassmann algebras are needed in our construction, but the dependence on Grassmann parameters has to be functorial. This is very much in line with the language of locally covariant quantum field theory \cite{BFV} and shows the power of this, slightly more abstract, category theory viewpoint. The importance of the functorial formulation is also 
emphasized by \cite{Ll,HHS} in the treatment of supersymmetric theories. 
A potential future direction of research would be to apply our framework to 
some finite supersymmetric models, e.g. $N=4$ SYM.

In our future investigations, we plan to apply this framework to study gauge fields coupled to fermions, with the hope that we would be able to describe the chiral anomaly in the framework of \cite{BF19}. We addressed the issue of anomalies, at present only for scalar fields, in our paper \cite{BDFR21}.  Other possible applications include treatment of known exactly-solvable models including fermions, notably the Thirring model. In particular, we hope to be able to use the framework established in this work, together with the results of \cite{BFR17} to put the known duality between the sine Gordon model and the Thirring model into the $C^*$-algebraic framework of AQFT. 
\section*{Data availability statement}
Data sharing is not applicable to this article as no new data were created or analyzed in this study.

\appendix

\section{Graded functionals}
For completeness, we include here the result on characterization of local functionals that depend on both fermionic and bosonic variables. Consider vector bundles $E_0\rightarrow M$ and $E_1\rightarrow M$, with their spaces of smooth sections $\Ecal_0\doteq  \Gamma(M, E_0)$ and $\Ecal_1\doteq  \Gamma(M, E_1)$.

Let $\Gamma'_{p|q}(M^{p+q}, E_0^{\boxtimes p}\boxtimes {E_1}^{\boxtimes q})$ denote the appropriate completion of the space ${\Gamma'(E_0)}^{\otimes_s p}\otimes \Gamma'(E_1)^{\wedge q}$,
understood as the space of distributional sections symmetric in the first $p$ and antisymmetric in the last $q$ arguments.

\begin{definition}\label{gradedFunctions}
	Define $\Ocal^k(\Ecal_0\oplus\Ecal_1[1])$ as the subspace of
	$C^\infty (\Ecal_0\times \wedge^k \Ecal_1,\CC)$ consisting of functionals that are  
	totally antisymmetric and $k$-linear in	the last $k$ arguments. Let $\Ocal(\Ecal_0\oplus\Ecal_1[1])\doteq  \prod_{k=0}^\infty \Ocal^k(\Ecal_0\oplus\Ecal_1[1])$.
\end{definition}
Derivatives with respect to the bosonic variable $\phi_0$ are defined in the usual way and derivatives with respect to the fermionc variable $\phi_1$ are given by Definition~\ref{GradedDerivaitve}, with $\phi_0$ fixed. In particular, for $F\in \Ocal^k(\Ecal_0\oplus\Ecal_1[1])$
\be
\frac{\delta^n F}{\delta \phi_0^n}(\phi_0)\in \Gamma_{n|k}'(M^{n+k},E_0^{\boxtimes n}\boxtimes E_1^{\boxtimes k})\cong \Gamma_{n|0}'(M^{n}, E_0^{\boxtimes n})\hat{\otimes}\Gamma_{0|k}'(M^{k}, E_1^{\boxtimes k})\,
\ee
so can be seen as a distribution with values in $\Ocal^k(\Ecal_1[1])$.
For proof see Theorem~III.10 of \cite{BDGR} and Proposition~3.4 of \cite{Book}. Similarly, for $n<k$, 
\be
\frac{\delta^n F}{\delta \phi_1^n}(\phi_0)\in \Gamma_{0|n}'(M^n, E_1^{\boxtimes n})\hat{\otimes} \Gamma_{0|k-n}'(M^{k-n}, E_1^{\boxtimes k-n})\,, 
\ee
so it is identified with a distribution with values in $\Ocal^{k-n}(\Ecal_1[1])$. Hence, in general, $\frac{\delta^n F}{\delta \phi_i^n}(\phi_0)$, $i=0,1$ is a distributional section on $M^n$ with values in $\Ocal(\Ecal_1[1])$. The usual rules for multiplication of distributions with given wave front sets apply in this case as well. More details can be found in \cite{Rej11a,Book}

\begin{theorem}
	Let $U$ be an open subset of $\Ec_0$ and 
	$F\in \Ocal^k(U\oplus\Ecal_1[1])$ be smooth in the sense of Bastiani.	Assume that
	\begin{enumerate}
		\item $F$ is additive.
		\item For every $\varphi\in U$, $h\in\bigoplus_{k\in\NN} \Ec_1^{\hat{\otimes} k-1}$,  the differentials
		$\frac{\delta F}{\delta \phi_0}(\varphi,h)$ and $\frac{\delta F}{\delta \phi_1}(\varphi,h)$ of $F$ have 
		empty  wave front sets and the maps 
		$(\varphi,h)\mapsto \frac{\delta F}{\delta \phi_0}(\varphi,h)$, $(\varphi,h)\mapsto \frac{\delta F}{\delta \phi_1}(\varphi,h)$ are Bastiani smooth from $U\times \bigoplus_{k\in\NN} \Ec_1^{\hat{\otimes} k}$ to $\Gamma_c(M,E^*_0)$ and  $\Gamma_c(M,E^*_1)$, respectively. Here $B_0^*$ and $B^*_1$ denote dual bundles.
	\end{enumerate}
	Then, for every  $\varphi\in U$,
	there is a neighborhood $V$ of the origin in $\Ec_0$, an integer $N$ and a smooth 
	$\CC$-valued function $\alpha$ on the $N$-jet bundle such that 
	\be
	F(\varphi+\psi;h_1\otimes\dots\otimes h_k)=\int_{M} \alpha(j^{i_0}_x(\psi),j^{i_1}_x(h_1),\dots,j^{i_k}_x(h_k))\,,
	\ee
	where $i_0,\dots,i_k<N$, for all $\psi\in V$ and $h\in\bigoplus_{k\in\NN} \Ec_1^{\hat{\otimes}k}$.
\end{theorem}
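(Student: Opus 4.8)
The plan is to run in parallel the argument used for the purely fermionic case (the theorem of Section~\ref{sec:func}) and the purely bosonic argument of \cite[Thm.~VI.3]{BDGR}, reducing the statement to the first-order differentials $\delta F/\delta\phi_0$ and $\delta F/\delta\phi_1$ and then invoking the jet-reconstruction results of \cite{BDGR}. The hypotheses are tailored to feed the two halves: multilinearity and antisymmetry take care of the fermionic slots, while additivity (together with the control of $\delta F/\delta\phi_0$) is what makes the reconstruction work in the bosonic direction, where only a \emph{local} statement — valid in a neighbourhood of a fixed configuration $\varphi$ — can be expected.

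First I would dispose of the fermionic arguments. Using $k$-linearity and total antisymmetry,
\[
F(\varphi;h_1\wedge\dots\wedge h_k)=\int_M\Bigl(\tfrac{\delta F}{\delta\phi_1}(\varphi,h_2\wedge\dots\wedge h_k)\Bigr)(x)\,h_1(x)\,dx ,
\]
and by hypothesis $\delta F/\delta\phi_1$ has empty wave front set and depends Bastiani-smoothly on $(\varphi,h)\in U\times\bigoplus_{k\in\NN}\Ecal_1^{\hat{\otimes}k-1}$. Applying Proposition~VI.14 of \cite{BDGR} in the $\phi_1$-variables, with $\phi_0$ held fixed and exactly as in the proof of the Section~\ref{sec:func} theorem, one concludes that $\delta F/\delta\phi_1(\varphi,h)$ depends pointwise only on finite jets of $h_1,\dots,h_{k-1}$, so iterating the identity above reduces the fermionic dependence of $F$ to that of a finite-jet object.

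Second, and this is the genuinely new step, I would integrate out the bosonic shift. Fix $\varphi\in U$, choose a convex neighbourhood $V$ of $0$ in $\Ecal_0$ with $\varphi+V\subset U$, and write, for $\psi\in V$,
\[
F(\varphi+\psi;h)=F(\varphi;h)+\int_0^1\Bigl(\tfrac{\delta F}{\delta\phi_0}(\varphi+t\psi,h),\,\psi\Bigr)\,dt .
\]
It therefore suffices to analyse the map $(\chi,h)\mapsto\delta F/\delta\phi_0(\chi,h)$ for $\chi\in\varphi+V$. Additivity of $F$ descends to the additivity of this derivative in the $\chi$-variable, and, combined with its empty wave front set and Bastiani smoothness into $\Gamma_c(M,E_0^*)$, Proposition~VI.14 of \cite{BDGR} applies and shows that $\delta F/\delta\phi_0(\chi,h)$ depends pointwise only on a finite jet of $\chi$ of some order $<N$, together with finite jets of the $h_i$. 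Feeding the two integral representations into one another produces a density-valued function $\alpha$ on the $N$-jet bundle of $E_0\oplus E_1$ — antisymmetric and multilinear in the $E_1$-jets — with $F(\varphi+\psi;h_1\otimes\dots\otimes h_k)=\int_M\alpha(j_x(\psi),j_x(h_1),\dots,j_x(h_k))$; smoothness of $\alpha$ as a function on the jet bundle then follows from Lemma~VI.15 and Proposition~VI.4 of \cite{BDGR}, applied slot by slot, while finiteness of the order $N$ comes from the local boundedness of the jet order together with compactness of supports.

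The main obstacle is the bookkeeping where the two reductions meet. One must disentangle the additivity hypothesis — which a priori constrains $F$ only jointly in $(\phi_0,\phi_1)$ — into the separate additivity statements for $\delta F/\delta\phi_0$ and $\delta F/\delta\phi_1$ needed to invoke \cite[Prop.~VI.14]{BDGR} in each group of variables, and one must verify that the neighbourhood $V\subset\Ecal_0$ and the single jet order $N$ can be chosen to serve the bosonic and all fermionic directions simultaneously. Once this is set up, no analytic input beyond \cite{BDGR} (and the nuclearity facts recalled at the start of the appendix) is required.
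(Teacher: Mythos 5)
Your proposal follows essentially the same route as the paper's proof: decompose $F(\varphi+\psi;h)$ via the fundamental theorem of calculus in the bosonic direction plus the multilinearity/antisymmetry identity in the fermionic direction, then apply Proposition~VI.14 of \cite{BDGR} to conclude finite-jet dependence of both integrands and Lemma~VI.15 for smoothness of the resulting function on the jet bundle. The only cosmetic difference is that the paper also cites Lemma~VI.13 of \cite{BDGR} to justify pulling the $t$-integral inside the bosonic term, a step you leave implicit.
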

\begin{proof}
	Let $F\in \Ocal^k(\Ecal_0\oplus\Ecal_1[1])$, $k\neq 0$. The fundamental theorem of calculus implies that
	\be
	\begin{split}
	F(\varphi+\psi,h_1\otimes\dots\otimes h_k)
	 &=\int_0^1 dt  \int_M \frac{\delta F}{\delta \phi_0(x)}(\varphi+t\psi,h_1\otimes\dots\otimes h_k) \psi(x) dx\\
	&\,+\frac{1}{k}\sum_{i=1}^k(-1)^{k-1}\int_M   
	\frac{\delta F}{\delta \phi_1(x)}(\varphi,h_1\otimes\dots\widehat{h_i}\dots 
	\otimes h_k)(x) h_i(x) dx\\
	&=\int_0^1 dt  \int_M \frac{\delta F}{\delta \phi_0(x)}(\varphi+t\psi,h_1\otimes\dots\otimes h_k) \psi(x) dx\\
	&\qquad+ \int_M   \frac{\delta F}{\delta \phi_1(x)}(\varphi;h_2\otimes\dots \otimes h_k) h_1(x) dx\,,
	\end{split}
	\ee
	as $F(\varphi,0)=0$ and $F(\varphi,.)$ is totally antisymmetric. 
	Denote $h\doteq  h_1\otimes\dots\otimes h_k$.
	We apply lemma VI.13 of \cite{BDGR} to the first term and conclude that 
	for all $\varphi \in U$ and all 
	$\psi \in V $ such that the segment 
	$\varphi + t\psi \subset U$ for $0\le t\le 1$,
	\be
	F(\varphi+\psi,h)=\int_M c_{0,\psi,h}(x) dx+\int_M c_{1,\psi,h}(x) dx\,,
	\ee
	where
	\be
	c_{0,\psi,h}(x)= \int_0^1 \frac{\delta F}{\delta\varphi_0(x)}
	(\varphi+t\psi;h) \psi(x)dt
	\ee
	and
	\be
	c_{1,\psi,h}(x)= \frac{\delta F}{\delta\varphi_1(x)}
	(\varphi;h_2\otimes\dots \otimes h_k) h_1(x)\ .
	\ee
	Now, we apply proposition VI.14 of \cite{BDGR}
	and conclude that the functions  $c_{0,\psi,h}$ and  
	$c_{1,\psi,h}$ depend only on
	finite jets of $\psi$ and $h_1,\dots,h_k$. Finally, we use 
	Lemma~VI.15 to conclude that the resulting 
	function on the jet bundle is smooth. This concludes the proof. 
\end{proof}

\bibliographystyle{amsalpha}

\end{document}